\newcommand{\restrict}{\upharpoonright}
\newcommand{\eps}{\varepsilon}
\newcommand{\abs}[1]{\lvert#1\rvert}
\newcommand{\Var}{\operatorname{Var}}
\newtheorem{thm}{Theorem}
\newtheorem{cor}[thm]{Corollary}
\newtheorem{con}[thm]{Conjecture}
\theoremstyle{definition}
\newtheorem{df}[thm]{Definition}
\theoremstyle{remark}
\newtheorem{rem}[thm]{Remark}
\newcommand{\E}{\mathbb E}
\begin{document}
		\title{
			Pricing complexity options
		}
		\author{
			Malihe Alikhani\\
			\small Department of Computer Science\\
			\small Rutgers, the State University of New Jersey\\
			\small Piscataway, NJ 08854
			\and
			Bj{\o}rn Kjos-Hanssen\thanks{
				Corresponding author:
				Bj\o rn Kjos-Hanssen,
				Department of Mathematics,
				University of Hawai\textquoteleft i at M\=anoa,
				Honolulu, HI 96822.
				Email: {\texttt bjoern.kjos-hanssen@hawaii.edu}.
				Tel.~no.~+1 (808) 956-8595.
				Fax~no.~+1 (808) 956-9139.
			}\\
			\small Department of Mathematics\\
			\small University of Hawai\textquoteleft i at M\=anoa\\
			\small Honolulu, HI 96822
			\and
			Amirarsalan Pakravan \\
			\small Department of Finance\\
			\small George Washington University\\
			\small Washington, DC 20052
			\and
			Babak Saadat\\
			\small Kash Co\\
			\small 625 N W Knoll Dr\\
			\small West Hollywood, CA 90069
		}
	\maketitle
	\begin{abstract}
		We consider options that pay the complexity deficiency
		of a sequence of up and down ticks of a stock upon exercise.
		We study the price of European and American versions of this option 
		numerically for automatic complexity, and
		theoretically for Kolmogorov complexity.
		We also consider
		run complexity, which is a restricted form of automatic complexity.

		\bigskip\noindent \textbf{Keywords:} automatic complexity; Kolmogorov complexity; options; option pricing
	\end{abstract}
	\tableofcontents
	\section{Introduction}
		In this article we consider the pricing of American and European options paying
		the \emph{complexity deficiency}, or intuitively the lack of complexity, of a sequence of up and down ticks for a financial security.
		The complexity notions we consider are
		plain and prefix-free Kolmogorov complexity, nondeterministic automatic complexity, and run complexity.
		\subsection{Motivation}
			We believe it may be of value in finance to have
			some notions of the complexity of a price path.
			Agents may want to insure against too complex or too simple price paths for a stock, for example.
			A very simple or complex path may be a sign that something is going on that the agent is not aware of.

			Weather is somewhat periodic, and automatic complexity measures
			periodicity, to some extent. Hence a complexity option may be used as a weather derivative.

			Casino owners may want to ensure that their casinos are truly random, so as to avoid unexpected losses.
			In general, anyone who makes an assumption of randomness may want to hedge that,
			as true randomness is not easy to guarantee, or even completely well-defined.

			\paragraph{Automatic complexity: between two extremes.} Of course, we can insure against certain types of non-randomness in simple ways.
			We can insure against a dramatic fall of a stock price by selling the stock short.
			This corresponds to \emph{run complexity} (Section \ref{Malihe}).
			At the other end, one cannot use \emph{Kolmogorov complexity} (Section \ref{sec:KC}) as a basis for the security,
			because Kolmogorov complexity is not computable.
			The \emph{nondeterministic automatic complexity},
			being both
			\begin{itemize}
				\item powerful enough to discern a variety of patterns, and at the same time
				\item single-exponential time computable,
			\end{itemize}
			may be a promising middle ground.
		\subsection{Automatic complexity and the idea of complexity deficiency}
			Kolmogorov complexity is an important notion that in a way is to complexity
			as Turing computability is to computability.
			It is computably approximable, but unfortunately not computable.
	 		As a remedy, \cite{MR1897300} defined the \emph{automatic complexity}
			of a finite binary string \(x=x_1\ldots x_n\) to be 
			the least number \(A_D(x)\) of states of a {deterministic finite automaton} \(M\) such that 
			\(x\) is the only string of length \(n\) in the language accepted by \(M\). 

			Automatic complexity is computable, but it does have a couple of awkward properties that make us want to tweak its definition.
			First, many of the automata used to witness the complexity have
			a \emph{dead state} whose sole purpose is to absorb any irrelevant or
			unacceptable transitions. Second, some strings $x=x_1\dots x_n$ have a different complexity from their reverse $x_n\dots x_1$.
			For instance \cite{COCOON:14,Kjos-EJC},
			\[
				A_D(011100) = 4 < 5 = A_D(001110).
			\]
			We tweak the definition of automatic complexity by introducing nondeterminism.
			\begin{df}[\cite{Kjos-EJC}]\label{precise}
				The nondeterministic automatic complexity $A_N(w)$ of a word $w$ is
				the minimum number of states of an NFA $M$
				(having no $\epsilon$-transitions) accepting $w$
				such that there is only one accepting path in $M$ of length $\abs{w}$.
			\end{df}
			\begin{figure}[h]
				\begin{tikzpicture}[shorten >=1pt,node distance=1.5cm,on grid,auto]
					\node[state,initial, accepting] (q_1)  {$q_1$}; 
					\node[state] (q_2)   [right=of q_1  ] {$q_2$}; 
					\node[state] (q_3)   [right=of q_2  ] {$q_3$}; 
					\node[state] (q_4)   [right=of q_3  ] {$q_4$};
					\node    (q_dots) [right=of q_4  ] {$\ldots$};
					\node[state] (q_m)   [right=of q_dots] {$q_m$};
					\node[state] (q_{m+1}) [right=of q_m  ] {$q_{m+1}$}; 
					\path[->] 
						(q_1)   edge [bend left] node      {$x_1$}   (q_2)
						(q_2)   edge [bend left] node      {$x_2$}   (q_3)
						(q_3)   edge [bend left] node      {$x_3$}   (q_4)
						(q_4)   edge [bend left] node [pos=.45] {$x_4$}   (q_dots)
						(q_dots) edge [bend left] node [pos=.6] {$x_{m-1}$} (q_m)
						(q_m)   edge [bend left] node [pos=.56] {$x_m$}   (q_{m+1})
						(q_{m+1}) edge [loop above] node      {$x_{m+1}$} ()
						(q_{m+1}) edge [bend left] node [pos=.45] {$x_{m+2}$} (q_m)
						(q_m)   edge [bend left] node [pos=.4] {$x_{m+3}$} (q_dots)
						(q_dots) edge [bend left] node [pos=.6] {$x_{n-3}$} (q_4)
						(q_4)   edge [bend left] node      {$x_{n-2}$} (q_3)
						(q_3)   edge [bend left] node      {$x_{n-1}$} (q_2)
						(q_2)   edge [bend left] node      {$x_n$}   (q_1);
				\end{tikzpicture}
				\caption{
					A nondeterministic finite automata that only accepts one string
					$x= x_1x_2x_3x_4 \ldots x_n$ of length $n = 2m + 1$.
				}
				\label{fig1}
			\end{figure}
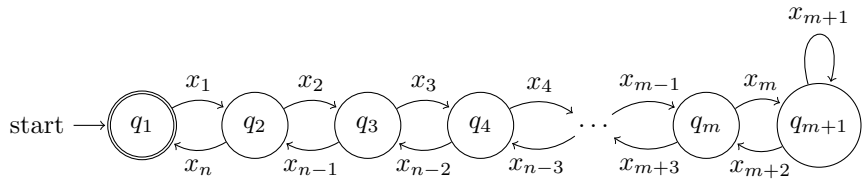
			Moreover, and most importantly for the present paper, $A_N$ gives rise to a striking instance of the idea of \emph{complexity deficiency}:
			\begin{thm}[\cite{Hyde,Kjos-EJC}]\label{Kayleigh}
				The nondeterministic automatic complexity $A_N(x)$ of a string $x$ of length $n$ satisfies
				\[
					A_N(x) \le b(n):={\lfloor} n/2 {\rfloor} + 1\text{.}
				\]
			\end{thm}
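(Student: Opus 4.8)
The idea is to exhibit, for every string $x = x_1 \cdots x_n$, an explicit NFA with $\lfloor n/2 \rfloor + 1$ states that accepts $x$ uniquely among strings of length $n$. The picture in Figure~\ref{fig1} already shows the shape of the witnessing automaton: a "there and back again" path. I would write $m = \lfloor n/2 \rfloor$ and build states $q_1, \dots, q_{m+1}$, with $q_1$ both initial and accepting. The first $m$ transitions read $x_1, \dots, x_m$ and march from $q_1$ to $q_{m+1}$; then a self-loop at $q_{m+1}$ (present only when $n$ is odd, reading the middle letter $x_{m+1}$) handles the parity defect; and the remaining transitions read $x_{m+1}$ (or $x_{m+2}$) through $x_n$ and march back down from $q_{m+1}$ to $q_1$. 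One must split into the cases $n = 2m$ and $n = 2m+1$: when $n$ is even there is no self-loop and the return path uses $m$ transitions; when $n$ is odd the loop absorbs exactly the one extra symbol.

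First I would define the automaton precisely as a transition relation, being careful that the "down" transitions $q_{k+1} \to q_k$ are labeled consistently with the corresponding $x_i$; then I would verify that $x$ itself is accepted along the obvious path (up, possibly loop once, down), which is immediate by construction. The substance of the argument is the uniqueness claim: I must show that no accepting path of length exactly $n$ other than the canonical one exists. I would argue by analyzing the structure of accepting paths: any path of length $n$ from $q_1$ back to $q_1$ in this "linear with a loop" graph is forced — the graph is essentially a path graph $q_1 - q_2 - \cdots - q_{m+1}$ plus one self-loop, so a walk that starts and ends at $q_1$ with $n$ steps has a constrained number of "up" versus "down" moves and loop-uses dictated purely by the length, and hence the sequence of labels read along any such walk is determined. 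That forces any accepted word of length $n$ to equal $x$.

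The main obstacle, and the place to be most careful, is precisely this combinatorial rigidity lemma: counting walks of a given length in the path-plus-loop graph and checking that the only one returning to $q_1$ in $n$ steps is the canonical up–(loop)–down walk. One has to rule out walks that oscillate — e.g., go up a few states, come back down partway, go up again — and show these either have the wrong length or cannot reach $q_1$ at step $n$. A clean way is to track the "height" (state index) as a lattice path: it starts at $1$, ends at $1$, stays in $[1, m+1]$, takes unit up/down steps plus possibly a flat step at height $m+1$, and has exactly $n$ steps; an easy parity/extremality argument shows it must rise monotonically to $m+1$ and descend monotonically, with the flat step (if $n$ is odd) taken at the top. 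Once that is established, the label sequence along the unique walk shape is exactly $x_1 \cdots x_n$, completing the proof. I would also double-check the small cases $n = 0, 1, 2$ by hand to make sure the formula $\lfloor n/2 \rfloor + 1$ and the construction degenerate correctly.
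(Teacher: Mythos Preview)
Your overall plan matches the paper's: the witnessing automaton is exactly the ``there and back again'' machine of Figure~\ref{fig1}, and for odd $n=2m+1$ your lattice-path count is correct---the parity constraint forces at least one use of the self-loop, which forces the walk to reach $q_{m+1}$, and then the tight budget of $m$ up-moves and $m$ down-moves pins the path uniquely.

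The gap is in the even case. You propose that for $n=2m$ one simply drops the self-loop and keeps the $m$ up-edges and $m$ down-edges, with $q_1$ both initial and accepting. But this automaton does \emph{not} have a unique accepting path of length $n$. Already for $n=4$ (so $m=2$, states $q_1,q_2,q_3$) there are two length-$4$ walks from $q_1$ to $q_1$: the intended $q_1\to q_2\to q_3\to q_2\to q_1$ and the oscillation $q_1\to q_2\to q_1\to q_2\to q_1$. Your ``parity/extremality'' argument does not rule this out, because without the self-loop nothing forces the walk to ever reach height $m+1$; the constraint $u=d=m$ is satisfied by any Dyck-type path of length $2m$, of which there are many. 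So the machine violates the uniqueness clause in Definition~\ref{precise}, and hence does not witness $A_N(x)\le m+1$.

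This is precisely why the paper's sketch warns that one ``must modify the picture slightly if $x$ has even length.'' A correct fix (as in the cited sources) changes the endpoint structure---for instance, keeping the self-loop at $q_{m+1}$ but making $q_2$ the accepting state, so that a length-$2m$ accepting walk has $u-d=1$ and the resulting odd value of $\ell$ forces $\ell\ge 1$, hence the walk reaches $q_{m+1}$ and the same extremality argument goes through. You should replace your even-case construction with something of this kind and redo the uniqueness count there.
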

			\begin{proof}[Proof sketch]
				The proof is essentially contained in Figure \ref{fig1},
				although we must modify the picture slightly if $x$ has even length.
			\end{proof}
			\begin{df}
				The \emph{nondeterministic automatic complexity deficiency} of a string $x$ is defined by
				\[
					D_n(x) = b(n) - A_N(x),
				\]
				with $b(n)$ as in Theorem \ref{Kayleigh}. Sometimes we write $D(x)$ for $D_n(x)$.
			\end{df}
			Experimentally we have found that about half of all strings have $D_n(x)=0$ \cite{Kjos-EJC}.
			We call such strings \emph{complex}, and other strings \emph{simple}, herein.

		\subsection{Option types: perpetual, American, European}
			We shall consider the following types of options and their prices.
			\begin{itemize}
				\item[$V$.] This is the price of the perpetual option that pays out the deficiency $D_n(x)$ when
					we exercise the option at a time $n$.
					(\emph{Perpetual} here means that
					we can exercise the option at any time step labeled by a nonnegative integer.)
					The price of a perpetual option is the supremum, over all exercise policies ${\tau}$,
					of the expected payoff when using ${\tau}$. There is no restriction that ${\tau}$ be computable
					(in particular, there is no restriction that there be enough time to compute it before the next market time step occurs),
					but if that were to become an issue one would presumably change the definition accordingly.
				\item[$V_n$.] This is the price of an \emph{American} option that we can exercise at any time step
					labeled by an integer between $0$ and $n$.
				\item[$W_n$.] This is the price of the \emph{European} option with expiry $n$.
					In this case we must exercise the option at time $n$, if at all. So
					$W_n=\E(\max\{D_n, 0\})$.
					Here, and in the rest of this article, we assume the underlying probability distribution is given by the fair-coin measure.
					In a finance setting it could more generally be given by
					the risk-neutral measure determined from a stock price process.
			\end{itemize}
			We have
			\[
				\E D_n \le W_n \le V_n \le V,
			\]
			and
			\begin{thm}\label{three}
				\[
					\sup_n \E D_n \le \sup_n V_n \le V \le \E \sup_n D_n.
				\]
			\end{thm}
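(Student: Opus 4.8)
The plan is to split the four-term chain into its three elementary links
\[
\sup_n \E D_n \;\le\; \sup_n V_n, \qquad \sup_n V_n \;\le\; V, \qquad V \;\le\; \E \sup_n D_n,
\]
and establish each by a one-line monotonicity or pointwise-domination argument. Throughout I regard $D_n$ as the measurable function $x \mapsto b(n) - A_N(x\restrict n)$ on $\{0,1\}^{\mathbb N}$ under the fair-coin measure, and an exercise policy $\tau$ as a (possibly non-computable) stopping time taking values in $\{0,1,2,\dots\}\cup\{\infty\}$, whose payoff on a sequence $x$ is $D_{\tau(x)}(x)$ when $\tau(x)<\infty$ and $0$ otherwise (so ``never exercising'' pays $0$). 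Two bookkeeping facts drive everything: first, for each $n$ the constant policy $\tau\equiv n$ is admissible for the expiry-$n$ American option, so $V_n\ge \E D_n$; second, the policies admissible for the expiry-$n$ American option are exactly the perpetual policies that never exercise after step $n$, hence they form a subset of the perpetual policies and $V\ge V_n$ for every $n$.

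For the first link I would invoke $\E D_n\le V_n$ from the first fact and take the supremum over $n$ on both sides, giving $\sup_n\E D_n\le\sup_n V_n$. For the second link I would invoke $V_n\le V$ from the second fact and again take the supremum over $n$, giving $\sup_n V_n\le V$. Neither step uses anything beyond nesting of the admissible classes.

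The third link is the only one carrying content, and even it is short. Fix an admissible perpetual policy $\tau$. By Theorem \ref{Kayleigh}, $A_N(x\restrict m)\le b(m)$ for all $m$, hence $D_m(x)\ge 0$ for every $m$ and every $x$; in particular $\sup_m D_m(x)\ge D_0(x)=0$. Consequently the payoff of $\tau$ at $x$ is $D_{\tau(x)}(x)\le\sup_m D_m(x)$ when $\tau(x)<\infty$ and is $0\le\sup_m D_m(x)$ otherwise, so the payoff is dominated by $\sup_m D_m$ pointwise. Since $\sup_m D_m$ is a countable supremum of measurable $[0,\infty)$-valued functions it is measurable with values in $[0,\infty]$, and integrating the pointwise bound yields $\E D_\tau\le\E\sup_m D_m$. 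Taking the supremum over all admissible $\tau$ gives $V\le\E\sup_n D_n$.

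I expect essentially no obstacle here: the mathematical content reduces to ``a supremum of expectations is at most the expectation of the supremum'' together with the trivial nesting of option classes. The care points are definitional rather than substantive: one must fix precisely what an admissible policy is so that the two nesting facts hold; one must read the last inequality in $[0,\infty]$, since $\E\sup_n D_n$ can be infinite (indeed $D_n$ is unbounded along the all-zero sequence, where $A_N(0^n)$ stays bounded while $b(n)\to\infty$); and one should note $D_n\ge 0$, which also explains why $W_n=\E\max\{D_n,0\}=\E D_n$ and why the ``never exercise'' branch is harmless. Whether $V$ itself is finite would need a separate estimate and is not claimed here.
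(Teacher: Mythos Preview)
Your proof is correct and follows essentially the same approach as the paper: the first two inequalities come from nesting of admissible exercise policies (the constant policy $\tau\equiv n$ for $\E D_n\le V_n$, and expiry-$n$ policies as a subclass of perpetual ones for $V_n\le V$), while the third comes from the pointwise bound $D_\tau\le\sup_m D_m$. Your handling of the third inequality is in fact a bit cleaner than the paper's, which splits into cases according to whether $\sup_n D_n$ is almost surely finite; your uniform argument via $[0,\infty]$-valued integration renders that case distinction unnecessary.
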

			\begin{proof}
				For the first inequality, it suffices to show
				\[
					\E D_n \le V
				\]
				for each $n$. This holds because one possible exercise policy is
				the static strategy of exercising at time $n$ no matter what.

				For the third inequality, there are two cases.

				Case 1: $\sup_n D_n$ is almost surely finite.
				Note that $D_n$ is integer-valued, so $\sup_n D_n$ will be realized at some finite stage $n_0$.
				Let us call \emph{magically prescient} the strategy which waits for
				$\sup_n D_n$ to be realized and then exercises the option.
				By contrast, an \emph{exercise policy} should be a stopping time, i.e.,
				it should not depend on future outcomes.
				We see that the payoff from the magically prescient strategy has
				a higher price than any exercise policy.
				It follows that $V\le\E\sup_n D_n$ in this case.

				Case 2: $\mathbb P(\sup_n D_n=\infty)>0$. Then $\E\sup_n D_n=\infty$ and so we are done.
			\end{proof}
			\begin{rem}
				In Case 2 of Theorem \ref{three}, if $\mathbb P(\sup_n D_n=\infty) = \eps > 0$ then we can even assert that $V=\infty$.
				Indeed if $V<\infty$ then we can buy the option, and wait for $D_n > V/\eps + 1$.
				The expected payoff is at least
				\[
					(\eps)(V/\eps + 1) = V + \eps > V,
				\]
				which would create an arbitrage.
			\end{rem}
			In Sections \ref{sec:KC} and \ref{sec:CFC} we shall consider several complexity notions, including
			\begin{itemize}
				\item prefix-free Kolmogorov complexity $K$,
				\item plain Kolmogorov complexity $C$, and
				\item nondeterministic automatic complexity $A_N$.
			\end{itemize}
			For each notion we first define one or more suitable deficiency notions $D_n(x)$:
			for instance, $D_n(x) = n + c_C - C(x)$ for a suitable constant $c_C$ for $C$, and
			$D_n(x) = \lfloor n/2\rfloor + 1 - A_N(x)$ for $A_N$.
			The following questions are natural for each of these deficiency notions:
			\begin{itemize}
				\item Does the price of the European option tend to $\infty$?
				\item Does the price of the American option tend to $\infty$?
				\item Does the American option have an efficiently computable exercise policy?
			\end{itemize}

	\section{Kolmogorov complexity}\label{sec:KC}
		\subsection{Plain complexity \texorpdfstring{$C$}{C}}
			Let $c_C$ be the least constant $c_C$ such that
			$C(x\mid n)\le n+c_C$ for all strings $x$ of any length $n$.
			If we define $D_n(x) = n + c_C - C(x\mid n)$ for $x$ of length $n$, then $D_n(x)\ge 0$ for all $x$,
			and $D_n(x)=0$ does occur. This is theoretically pleasant. Deficiencies are nonnegative and can
			be zero.
			Of course, $c_C$ depends on the version of the plain length-conditional Kolmogorov complexity $C(\cdot\mid\cdot)$ that we use.
			In this setting, we have
			\begin{thm}\label{plainFinite}
				$\sup_n\E D_n<\infty$.
			\end{thm}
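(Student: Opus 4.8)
The plan is to show that $\E D_n = \sum_{x\in 2^n} 2^{-n}(n + c_C - C(x\mid n))$ is bounded uniformly in $n$, i.e., that the average value of $C(x\mid n)$ over strings of length $n$ is at least $n - O(1)$. This is a standard counting fact: the number of strings $x$ of length $n$ with $C(x\mid n) \le n - k$ is at most $2^{n-k+1} - 1$, since there are fewer than $2^{n-k+1}$ programs of length less than $n-k+1$, and each description describes at most one string. So $\P(D_n(x) \ge c_C + k) = \P(C(x\mid n)\le n-k) < 2^{-k+1}$.

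First I would write $\E D_n = c_C + \E(n - C(x\mid n))$ and observe that $n - C(x\mid n)$ can be slightly negative (by at most a constant, since $C(x\mid n)\le n + c_C$ forces $n - C(x\mid n)\ge -c_C$), so it suffices to bound $\E(n - C(x\mid n))^+$ or just $\E\max\{n-C(x\mid n), 0\}$. Writing $Y = \max\{n - C(x\mid n),0\}$, I would use the tail-sum formula $\E Y = \sum_{k\ge 1}\P(Y\ge k)$ together with the counting bound $\P(Y \ge k) = \P(C(x\mid n)\le n-k) \le 2^{-n}(2^{n-k+1}-1) < 2^{-k+1}$. Summing the geometric series gives $\E Y < \sum_{k\ge1} 2^{-k+1} = 2$, hence $\E D_n < c_C + 2$ for every $n$, which is the desired uniform bound; taking the supremum over $n$ completes the proof.

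The only point requiring care is the counting bound itself, and making sure it holds for the \emph{length-conditional} complexity $C(\cdot\mid n)$ rather than plain $C$. For a fixed $n$, the conditional-on-$n$ universal machine is still a partial computable function from strings to strings, and the number of potential programs of length $< n-k+1$ is $2^{n-k+1}-1$; each outputs at most one string, so at most $2^{n-k+1}-1$ strings of length $n$ have $C(x\mid n) < n-k+1$, i.e., $C(x\mid n)\le n-k$. This is routine but is the step I would be most careful to state correctly, since everything else is elementary manipulation of the tail sum. I should also note that the bound $c_C + 2$ is independent of $n$, which is exactly what $\sup_n \E D_n < \infty$ asserts.

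I do not expect any genuine obstacle here — the result is essentially the well-known fact that Kolmogorov complexity is incompressible on average — so the "hard part" is merely bookkeeping: tracking the additive constant $c_C$ through the identity $D_n = n + c_C - C(x\mid n)$ and confirming that the negative part of $n - C(x\mid n)$ contributes only a bounded amount.
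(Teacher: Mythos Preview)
Your proposal is correct and follows essentially the same approach as the paper: the counting bound on programs of length at most $n-k$ combined with the tail-sum formula $\E Z=\sum_{k\ge 1}\P(Z\ge k)$. The only difference is cosmetic: since $D_n\ge 0$ by the very choice of $c_C$, the paper applies the tail sum directly to $D_n$ (obtaining $\P(D_n\ge k)\le 2^{c_C-k+1}$ and hence $\E D_n\le 2^{c_C+1}$), whereas your detour through $Y=(n-C(x\mid n))^+$ is unnecessary but harmless, and in fact yields the slightly sharper bound $\E D_n<c_C+2$.
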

			\begin{proof}
				Fix $n$. For any $a$, there are only $2^{a+1} - 1$ binary strings of length at most $a$.
				All descriptions witnessing complexity (given $n$) being at most $a$ must be among them, so
				at most $2^{a + 1} - 1$ many strings have complexity (given $n$) of at most $a$.
				(This is a standard argument, see \cite[Proposition 3.1.3]{MR2732288}.)
				Applying this to $a = n + c_C - k$,
				at most $2^{n + c_C - k + 1} - 1$ strings $x$
				(in particular, at most that many strings of length $n$) satisfy
				$D_n(x) \ge k$. That is,
				\[
					\mathbb P(D_n(x)\ge k) \le 2^{c_C - k + 1}.
				\]
				Then we have
				\[
					\E D_n = \sum_{k = 0}^\infty k \,\mathbb P(D_n = k)
					= \sum_{k = 1}^\infty \mathbb P(D_n \ge k)
					\le \sum_{k = 1}^\infty 2^{c_C - k + 1} = 2^{c_C + 1}.\qedhere
				\]
			\end{proof}
			It turns out that for options expiring at time $n$, there is a significantly better exercise policy than
			the static strategy of waiting until the very end:
			\begin{thm}\label{MLstrategy}
				For plain Kolmogorov complexity, $\sup_n V_n=\infty$,
				even if we require efficient computation of the exercise policy.
			\end{thm}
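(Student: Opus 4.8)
The plan is to show that for every target value $N$ there are an expiry $n$ and an \emph{efficiently computable} exercise policy $\tau\le n$ with $\E D_\tau>N$. Since the naive policy ``exercise at time $n$'' only achieves $\E D_n\le 2^{c_C+1}$ (as in the proof of Theorem~\ref{plainFinite}), this forces us to use the American feature to bail out early the first time the price path looks conspicuously simple.

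Concretely, fix a block length $L=L(N)$ (take $L$ somewhat larger than $N$, to swallow additive constants), partition the tick sequence $X=x_1x_2\cdots$ into consecutive length-$L$ blocks $B_i=x_{(i-1)L+1}\cdots x_{iL}$, and let $\tau$ be the first time $iL\le n$ at which the just-completed block $B_i$ equals $0^L$, with $\tau=n$ if no all-zero block occurs among $B_1,\dots,B_{\lfloor n/L\rfloor}$. Deciding, at each step, whether the last $L$ bits read are all zero costs $O(L)$ work, so $\tau$ is computable in time polynomial in $n$; in particular the policy is efficient, and crucially it never needs to know $D_m$ itself --- it only tests an easily checkable event, so the uncomputability of plain complexity is irrelevant.

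The proof then rests on two estimates. The first is a \emph{deterministic} payoff bound at any triggering time: if $x_{m-L+1}\cdots x_m=0^L$ then $X\restriction m$ is $X\restriction(m-L)$ followed by $L$ zeros, so, given $m$, one can recover $X\restriction m$ from the string $X\restriction(m-L)$ together with a self-delimiting code for $L$; hence $C(X\restriction m\mid m)\le (m-L)+2\log_2 L+O(1)$ and therefore $D_m\ge L-2\log_2 L-O(1)$. The second is a probability bound: under the fair-coin measure the blocks are independent and each equals $0^L$ with probability $2^{-L}$, so $\P(\tau<n)=1-(1-2^{-L})^{\lfloor n/L\rfloor}$, which exceeds $1/2$ once $n$ is large enough relative to $L$ (roughly $n\gtrsim L\cdot 2^{L}$). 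Combining, $V_n\ge\E D_\tau\ge \bigl(L-2\log_2 L-O(1)\bigr)\cdot(1/2)>N$ for suitably chosen $L$ and then $n$; letting $N\to\infty$ yields $\sup_n V_n=\infty$.

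There is no serious obstacle here; the one thing to be careful about is the order and size of the parameters --- $L$ must first be chosen large enough that a triggered exercise pays more than $2N$, and only then $n$ large enough (exponentially large in $L$) that a triggered exercise actually happens with probability at least one half --- and, relatedly, keeping straight that ``efficient'' constrains the stopping rule, not the (uncomputable) payoff functional. One could equally use a sliding window of length $L$ in place of disjoint blocks, replacing the independence computation by a Markov-inequality bound on the waiting time for an all-zero run; this only makes early exercise more likely.
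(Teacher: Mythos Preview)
Your argument is correct, but it takes a genuinely different route from the paper's own proof. The paper exploits Martin-L\"of's complexity oscillations: at a fixed grant date $m$ one reads $X\restrict m$ as the binary representation of an integer $\ell<2^m$ and commits to exercising at time $\ell$. Since the first $m$ bits of $X\restrict\ell$ are then recoverable from the length $\ell$ alone, one gets $D_\ell\gtrsim m$; with $n\ge 2^m$ this yields $V_n\gtrsim\log_2 n$. That policy is not merely efficient but essentially \emph{static}: once the first $m$ bits are seen, the exercise date is fixed and the option effectively becomes European.

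Your ``wait for an all-zero block'' policy is instead the run-based strategy that the paper records separately in Remark~\ref{referee6} (attributed there to a referee): wait until the current path ends in a long run of zeros, then exercise. Your version with disjoint blocks of length $L$ and a Bernoulli calculation is a clean variant of that idea. Both approaches give $V_n$ of order $\log_2 n$; the oscillation argument has the advantage of yielding a bound that holds for \emph{every} path (not just with probability $\ge 1/2$), while your argument is more elementary and makes the efficiency of the stopping rule completely transparent. One small sharpening: since the machine is given $n$ and knows the length of its input, you do not actually need the self-delimiting code for $L$; $C(X\restrict m\mid m)\le (m-L)+O(1)$ already, giving $D_m\ge L-O(1)$.
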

			The idea of the proof is to use complexity oscillations, first observed by \cite{MR0451322}:
			when the initial part of a string $x$ is a binary encoding of the length of $x$,
			the plain Kolmogorov complexity of $x$ will be low.
			\begin{proof}
					\cite{MR0451322} showed that deficiency is unbounded for all reals:
					for each $X$ and $b$ there is an $n$ with ${D({X \restrict n}) > b}$.
					We can computably identify such an $n$.
					The well known idea is that we take a prefix $X\restrict m$;
					consider it as a binary representation of a length $\ell < 2^m$;
					and then consider $\sigma=X\restrict\ell$.
					Since the beginning of $\sigma$ is known just from the length of $\sigma$,
					$\sigma$ is compressible.
					This translates into an exercise policy for our option:
					at the \emph{grant date} $m$ we decide on the date $\ell$ at which we are going to exercise.
Thus at the grant date our option style is transformed from American to European.
			\end{proof}

			\begin{rem}
				Since $C(x\mid n)\le^+ C(x)$, Theorem \ref{MLstrategy} holds equally for
				length-conditional plain Kolmogorov complexity, and
				Theorem \ref{plainFinite} also holds if we consider
				plain Kolmogorov complexity that is not length-conditional.
			\end{rem}
		\subsection{Prefix-free complexity \texorpdfstring{$K$}{K} with \texorpdfstring{$C$}{C}-style deficiency}
			Let $K$ denote prefix-free Kolmogorov complexity.
			With $D_n(x) = n - K(x)$, there is no limiting deficiency distribution in this case (or one could say the deficiency is in the limit $-\infty$ almost surely).
			That is, $K(w)\ge \abs{w}-c$ for almost all $w$, for any $c$.
			Indeed, for each $c\in\mathbb Z$,
			\[
				\lim_{n\rightarrow\infty}\frac{\left| \sigma\in 2^n: K(\sigma)\ge n-c\right|}{2^n} = 1,
			\]
			as is easily shown using $\sum_\sigma 2^{-K(\sigma)}<1$.
			If the $\limsup$ of the complement is $\delta>0$, then for each $\eps>0$ there exist $N_k$ with 
			\[
				1\ge \sum_\sigma 2^{-K(\sigma)} = \sum_n \sum_{\abs{\sigma}=n}2^{-K(\sigma)}
			\]
			\[	> \sum_k \delta(1-\eps) 2^{N_k}2^{-(N_k-c)}
				= (1-\eps)\delta \sum_k^\infty 2^c = \infty.
			\]
			\begin{thm}\label{prefixFinite}
				Let $K$ denote prefix-free Kolmogorov complexity $K$ and define the deficiency $D_n(x)=n-K(x)$
for a string $x$ of length $n$.
				The price of the perpetual option that pays $D_n - a$ is at most $2^{1 - a}$.
			\end{thm}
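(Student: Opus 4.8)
The plan is to bound the expected payoff of an arbitrary exercise policy by $2^{1-a}$ uniformly, and then take the supremum. Recall that an exercise policy is a stopping time $\tau$ for the filtration $(\mathcal F_n)$ generated by the successive coin tosses; upon exercising at a finite time $\tau$ one receives $D_\tau-a=\tau-K(X\restrict\tau)-a$, and nothing (payoff $0$) if one never exercises. Since neither declining to exercise nor exercising when $D_\tau-a<0$ can raise the expectation, it suffices to prove $\E\bigl[(D_\tau-a)^+\mathbf 1[\tau<\infty]\bigr]\le 2^{1-a}$ for every such $\tau$.

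The heart of the matter is the tail estimate $\P\bigl(D_\tau\ge m,\ \tau<\infty\bigr)\le 2^{-m}$ for every $m\in\mathbb Z$. To obtain it, write $\{\tau=n\}=\bigsqcup_{\sigma\in S_n}[\sigma]$ with $S_n\subseteq 2^n$, which is legitimate because $\{\tau=n\}$ is $\mathcal F_n$-measurable; since the events $\{\tau=n\}$ are pairwise disjoint, all the strings occurring in $\bigcup_n S_n$ are distinct (indeed they form an antichain). Then
\[
  \P(D_\tau\ge m,\ \tau<\infty)=\sum_n\ \sum_{\substack{\sigma\in S_n\\ n-K(\sigma)\ge m}}2^{-n}
  \le\sum_n\ \sum_{\sigma\in S_n}2^{-m}2^{-K(\sigma)}
  \le 2^{-m}\sum_{\sigma}2^{-K(\sigma)}\le 2^{-m},
\]
where the first inequality uses $2^{-n}\le 2^{-m}2^{-K(\sigma)}$ whenever $n-K(\sigma)\ge m$, and the last step is Kraft's inequality for the prefix-free universal machine. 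Since $(D_\tau-a)^+\le(D_\tau-\lfloor a\rfloor)^+$ and $D_\tau-\lfloor a\rfloor$ is integer-valued on $\{\tau<\infty\}$, the layer-cake formula gives
\[
  \E\bigl[(D_\tau-a)^+\mathbf 1[\tau<\infty]\bigr]\le\sum_{k=1}^{\infty}\P\bigl(D_\tau\ge\lfloor a\rfloor+k,\ \tau<\infty\bigr)\le\sum_{k=1}^\infty 2^{-\lfloor a\rfloor-k}=2^{-\lfloor a\rfloor}\le 2^{1-a},
\]
using $\lfloor a\rfloor\ge a-1$ in the last inequality. Taking the supremum over exercise policies $\tau$ yields $V\le 2^{1-a}$.

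The step that needs genuine care -- and the only place the hypotheses really bite -- is the structural observation that makes Kraft's inequality applicable: an exercise policy must be non-anticipating, so the cylinders $[\sigma]$ defining $\{\tau=n\}$ for different $n$ are pairwise disjoint, which is exactly what forces the counted strings $\sigma$ to be distinct and hence lets $\sum_\sigma 2^{-K(\sigma)}\le 1$ control the entire double sum. A ``magically prescient'' strategy that could exercise at the best past-or-future time would wreck this, and Theorem~\ref{three} is the correct statement in that regime instead. One could alternatively route the proof through Theorem~\ref{three} together with the elementary ample-excess identity $\E\sum_n 2^{D_n}=\sum_\sigma 2^{-K(\sigma)}\le 1$ (so $\E\,2^{\sup_n D_n}\le 1$, whence $\sup_n D_n<\infty$ a.s.\ and $\E(\sup_n D_n-a)^+\le 2^{-a}$), but the direct argument above is self-contained.
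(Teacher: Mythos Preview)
Your argument is correct, but it takes a different (and slightly more laborious) route than the paper. The paper does not fix a stopping time and decompose $\{\tau=n\}$ into cylinders; instead it immediately invokes the inequality $V\le \E\bigl[\sup_n(D_n-a)^+\bigr]$ from Theorem~\ref{three} (the magically prescient upper bound) and then bounds that expectation via the tail estimate $\P\bigl(\exists n:\ K(X\restrict n)<n-c\bigr)\le 2^{-c}$, cited as \cite[Lemma~6.2.2]{MR2732288}, which is itself an immediate consequence of Kraft's inequality. In other words, the ``alternative route through Theorem~\ref{three}'' you sketch in your final paragraph \emph{is} the paper's proof. Your direct approach has the virtue of being self-contained and of making visible exactly where the stopping-time (non-anticipation) hypothesis enters; the paper's approach is shorter and, notably, shows that even the prescient strategy has expected payoff at most $2^{1-a}$, so your remark that prescience ``would wreck this'' applies only to your particular cylinder decomposition, not to the conclusion. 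One cosmetic point: you only need the strings in $\bigcup_n S_n$ to be \emph{distinct} (automatic, since $S_n\subseteq 2^n$), not that they form an antichain; Kraft's inequality $\sum_{\sigma\in 2^{<\omega}}2^{-K(\sigma)}\le 1$ already ranges over all finite strings.
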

			\begin{proof}
				By \cite[Lemma 6.2.2]{MR2732288},
				\[
					\mathbb P(\sup_n D_n - a > c) = \mathbb P(\exists n\, K(X\restrict n) < n - c - a) \le 2^{-c - a}.
				\]
				Let ${D_n^+ = \max \{D_n - a, 0\}}$.
				Since we would not exercise an option giving negative payoff, it follows that
				\begin{eqnarray*}
					V \le \E(\sup_n D_n^+) &=& \sum_{c=0}^\infty c\, \mathbb P\left(\sup_n D_n^+ = c\right)
				\\
					= \sum_{c=1}^\infty c\, \mathbb P\left(\sup_n D_n^+ = c\right)
					&=& \sum_{c=1}^\infty \mathbb P\left(\sup_n D_n^+ \ge c\right)
				\\
					= \sum_{c=1}^\infty \mathbb P\left(\sup_n D_n - a > c-1\right) &\le& \sum_{c=1}^\infty 2^{-(c-1)-a} = 2^{1 - a}.\qedhere
				\end{eqnarray*}

			\end{proof}

		\subsection{Prefix-free complexity \texorpdfstring{$K$}{K} with its natural notion of deficiency}
			\begin{thm}[Deficiency based on an upper bound for \texorpdfstring{$K$}{K}]\label{Kbound}
				If we fix a constant $c_K$ such that for prefix-free Kolmogorov complexity $K$,
				\[
					K(x)\le n + K(n) + c_K
				\]
				for all $x$ of any length $n$,
				and let 
				\[ 
					D_n(x) = n + K(n) + c_K - K(x)\ge 0,
				\]
				then
				$\E D_n$ is bounded, but $V_n\rightarrow\infty$.
			\end{thm}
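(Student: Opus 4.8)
The plan is to prove the two assertions in turn: first that $\E D_n$ is bounded uniformly in $n$, then that $V_n\to\infty$ via an explicit exercise policy.

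\emph{Part 1: boundedness of $\E D_n$.} I would start with the counting argument of Theorem~\ref{plainFinite}: at most $2^m$ strings $x$ have $K(x)\le m$, so taking $m=n+K(n)+c_K-k$ gives $\P(D_n\ge k)\le 2^{K(n)+c_K-k}$ and hence $\E D_n\le K(n)+O(1)$. The snag is that this is \emph{not} uniform in $n$, since $K(n)\to\infty$; the fix is to first strip off the $K(n)$ term. For this I would use that $n$ is computable from $x$, so $K(x)=^+K(n,x)$, together with the symmetry-of-information inequality $K(n,x)\ge^+K(n)+K(x\mid n^*)$ for prefix-free complexity ($n^*$ a shortest program for $n$). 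These give
\[
	D_n(x)=n+K(n)+c_K-K(x)\le n+c_K-K(x\mid n^*)+O(1).
\]
Since $K(\cdot\mid n^*)$ is itself a prefix-free complexity, at most $2^m$ strings $y$ satisfy $K(y\mid n^*)\le m$ by Kraft, and dividing by $2^n$ yields $\P(D_n\ge k)\le 2^{c_K-k+O(1)}$ for every $k$. Summing the geometric series gives $\E D_n\le\sum_{k\ge1}2^{c_K-k+O(1)}=O(2^{c_K})$, independent of $n$. (Alternatively, one can note that $n\mapsto\sum_{\abs{x}=n}2^{-K(x)}$ is a lower-semicomputable semimeasure on $\mathbb N$, hence $O(2^{-K(n)})$ by the coding theorem, and run the count with that bound.)

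\emph{Part 2: $V_n\to\infty$.} Here I would mimic Theorem~\ref{MLstrategy} and exhibit an efficiently computable exercise policy whose expected payoff tends to $\infty$. Fix the underlying real $X$ and put $m=m(n):=\lfloor\log_2(n+1)\rfloor-1$, so $m\to\infty$. At the grant date $m$, read $X\restrict m$, let $\ell$ be the integer whose binary representation is a leading $1$ followed by $X\restrict m$ --- so $2^m\le\ell\le n$, and $\ell$ and $X\restrict m$ determine each other --- and exercise at time $\ell$. To bound the payoff: from $\ell^*$ one recovers $\ell$, hence $m$ and $X\restrict m$, and one knows exactly how many further bits of $X$ to append, so $K(X\restrict\ell\mid\ell^*)\le^+\ell-m$; with the easy direction of the chain rule this gives $K(X\restrict\ell)\le^+K(\ell)+K(X\restrict\ell\mid\ell^*)\le^+K(\ell)+\ell-m$, and therefore
\[
	D_\ell(X\restrict\ell)=\ell+K(\ell)+c_K-K(X\restrict\ell)\ge^+ m.
\]
The $K(\ell)$ terms cancel, so this lower bound holds deterministically for every $X$ with a single additive constant; hence the policy has expected payoff at least $m(n)-O(1)$, so $V_n\ge m(n)-O(1)\to\infty$. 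The policy is efficient because $\ell$ is read off from $X\restrict m$ by a base conversion, with no appeal to the incomputable value of $K$.

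\emph{Main obstacle.} The delicate step is the uniformity in Part 1: the crude count of compressible strings overshoots by the additive $K(n)$, and repairing this forces a detour through symmetry of information for prefix-free complexity (or the coding theorem) to replace $K(x)$ by the conditional complexity $K(x\mid n^*)$. Part 2 is a routine adaptation of the complexity-oscillation policy already used for plain complexity; the one point to verify is the cancellation of the $K(\ell)$ terms, which is precisely what makes the payoff bound --- and hence the divergence of $V_n$ --- uniform.
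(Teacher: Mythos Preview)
Your proposal is correct. In Part~1 you make explicit the step the paper only gestures at with ``the same proof \ldots\ using an analogous property'': the naive count overshoots by an additive $K(n)$, and your detour through symmetry of information (or, equivalently, the coding-theorem bound $\sum_{\abs{x}=n}2^{-K(x)}=O(2^{-K(n)})$) is exactly what is needed to make the bound uniform in $n$. This is the same argument in spirit, just carried out carefully.

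Part~2 is where you genuinely diverge from the paper. The paper argues softly: it quotes the almost-sure unboundedness of $D_n(X\restrict n)$ (Li--Vit\'anyi), then for an arbitrary target value picks $n_0$ large enough that the target is reached with probability $\ge 1-\varepsilon$ by time $n_0$, giving $V_{n_0}\ge 17(1-\varepsilon)$; monotonicity of $V_n$ finishes. This yields no rate and no explicit policy. You instead adapt the Martin--L\"of oscillation policy of Theorem~\ref{MLstrategy}: decode a length $\ell$ from the first $m$ bits and exercise there, obtaining the deterministic lower bound $D_\ell(X\restrict\ell)\ge m-O(1)$ for every $X$, hence $V_n\ge\log_2 n-O(1)$. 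Your route buys an efficiently computable exercise policy and an explicit rate of divergence; the paper's route is shorter and leans on a cited result, but is non-constructive. Both are valid; yours proves strictly more.
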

			\begin{proof}
				The same proof as for Theorem \ref{plainFinite} but using an analogous property
				shows that $\E D_n$ is bounded. In this case, however,
				$\sup D_n(X\restrict n)$ will be $\infty$ for almost all $X\in 2^\omega$.
				In fact Li and Vit\'anyi showed
				$D_n(X\restrict n) > \log n$ for infinitely many $n$ for almost all $X$.

				Solovay showed that	$\liminf D_n(X\restrict n)$ will be finite \cite{MillerYu}.

				$V = \infty$ in this case since we can simply wait for a sufficiently high $D_n$ value.
				What about $V_n$? Consider an arbitrary constant,
				which for expository vividness we will take to equal 17.
				Almost surely there will be an $n$ with $D_n(X\restrict n)\ge 17$. Therefore for each $\eps$
				there is an $n_0$ such that
				\[
					\mathbb P \bigcup_{n\le n_0} \{ D_n(X\restrict n)\ge 17 \} \ge 1 - \eps
				\]
				and so $V_{n_0}\ge 17(1 - \eps)$. Moreover $V_n\le V_{n + 1}$ for American options. 
				So $V_n\rightarrow\infty$ in this case.
				The exercise policy would be to wait for $D_n=17$ to occur and then exercise.
			\end{proof}

		\begin{sidewaystable}
			\centering
			\begin{tabular}{|c|c|c|c|c|}
				\hline
				$D_n$&
				$\sup_n \E D_n$&
				$\sup_n V_n$&
				$\E \sup_n D_n$\\
				\hline
				$n + c_K - K(x)$&
				$\therefore<\infty$&
				$\therefore<\infty$&
				$<\infty$ (Theorem\ \ref{prefixFinite})\\
				$n + K(n) + c_K - K(x)$&
				$<\infty$ (Theorem\ \ref{Kbound})&
				$\infty$ (Theorem\ \ref{Kbound})&
				$\therefore\infty$
				\\
				$n + c_C - C(x)$&
				$<\infty$ (Theorem\ \ref{plainFinite})&
				$\infty$ (Theorem\ \ref{MLstrategy})&
				$\therefore\infty$
				\\
				$n + c_C - C(x\mid n)$&
				$<\infty$ (Theorem\ \ref{plainFinite})&
				$\infty$ (Theorem\ \ref{MLstrategy})&
				$\therefore\infty$
				\\
				\hline
				$\lceil n/2\rceil + 1 - A_N(x)$&
				$<\infty$? (Conjecture \ref{oslo}) &
				$\infty$? (Conjecture \ref{oslo}) &
				$\therefore\infty$?
				\\
				\hline
			\end{tabular}
			\caption{Infinity and finiteness of option prices for various complexity deficiencies $D_n(x)$, for strings $x$ of length $n$.
			The conclusions labeled by $\therefore$ (``therefore'') follow from the inequalities
			$\sup_n \E D_n \le \sup_n V_n\le \E \sup_n D_n$
			(Theorem \ref{three}).
			}
			\label{priceOverview}
		\end{sidewaystable}

		An overview of the deficiency option prices is given in Table \ref{priceOverview}.
		\begin{rem}
			Of course, one does not need to only consider deficiencies.
			One could consider an option paying out $K(x) - n$.
			This value will go to infinity, but how fast?
			What is our exercise policy if we are not given access to $K$?
			Another possibility is to consider dips in complexity associated with the Kolmogorov structure function
			\cite{MR2103496}
			and its automatic complexity variant \cite{Kjos-TCS}.
		\end{rem}

		\subsection{Using runs}
			\begin{rem}\label{referee6}
				An anonymous referee suggested the following approach to obtaining results of the form $V_n\rightarrow\infty$.
				Let $R_n$ be the longest run of 0s in a string of length $n$ and let $\E$ and $\Var$ denote expectation and variance with respect
				to the uniform distribution on $\{0,1\}^n$.
				Now, if $U$ is a universal prefix-free machine,
				we can define another machine $M$ by the following algorithm: on input $x^*$, simulate $U$, and if $U(x^*)=x$, then
				\[
					M(x^*)=f(x):=x\,0^{\lfloor\log_2 \abs{x}\rfloor-c}
				\]
				for a fixed constant $c$.
				The domain of $M$ equals the domain of $U$, hence $M$ is also a prefix-free machine.
				Thus
				\[
					K(x\,0^{\lfloor\log_2 \abs{x}\rfloor-c})\le^+ K(x).
				\]
				Let now $m = \abs{f(x)} = \abs{x} + \lfloor\log_2 \abs{x}\rfloor - c$ and $y=f(x)$.
				Since $K(n)\le ^+ K(m)$ by the choice of $m$, we have
				\[
					K(y)\le n + K(n) + c_K \le^+ n + K(m) + c_K = (m + K(m) + c_K) - (m-n)
				\]
				and
				\[
					C(y)\le^+ n + c_C = m + c_C - (m - n).
				\]
				Now we employ the trading strategy whereby we wait until our input is of the form $x\,0^{\log_2 \abs{x}-c}$, and then exercise.
				By Theorem \ref{BoydThm} below, $\abs{\E(R_n)-\log_2 n}$ and $\Var(R_n)$ are both bounded by a constant $c$.
				By the argument in Section \ref{Malihe} below, with high probability we will be able to exercise.
				Thus for American options, with payoff $D_n(x)$ either $n + K(n) + c_K - K(x)$ or $c_C - C(x)$, we obtain $V_n\rightarrow\infty$.
			\end{rem}
			\begin{thm}[\cite{Boyd}]\label{BoydThm}
				Let $R_n$ be the longest run of heads in
				a binary sequence of length $n$
				distributed according to the Bernoulli distribution with parameter $1/2$. Let $\log=\ln$. Then
				\[
					\E(R_n)
					= \log_2 n + \frac{\gamma}{\ln 2} - \frac32 + \varepsilon_1(\log n/\log 2) + r_1(n),
				\]
				where $\varepsilon_1(\alpha)$ is a function of period 1 which satisfies
				$\abs{\varepsilon_1(\alpha)} < 2\times 10^{-6}$ for all $\alpha$, and
				$r_1(n) = O(n^{-1}(\log n)^4) \rightarrow 0$. 
				Moreover,
				\[
					\Var(R_n) = \frac1{12}+\frac{\pi^2}{6(\log 2)^2} + \eps_2\left(\frac{\log n}{\log 2}\right) + O(n^{-1}(\log n)^5),
				\]
				where $\eps_2(\alpha)$ has period 1, and $\abs{\eps_2(\alpha)}<10^{-4}$ for all $\alpha$.
			\end{thm}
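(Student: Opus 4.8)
The plan is to establish Theorem~\ref{BoydThm} by the classical combination of an exact generating-function formula for the law of $R_n$ with Mellin-transform asymptotics; this is the approach that makes the oscillating corrections $\eps_1,\eps_2$ visible and pins down the exact constants. First I would pass to the cumulative distribution function. Let $a_k(n)$ be the number of binary words of length $n$ with no block $1^k$ of $k$ consecutive heads, so that $\P(R_n<k)=a_k(n)/2^n$. Decomposing such a word into pieces $1^j0$ with $j<k$ followed by a trailing run of fewer than $k$ heads gives
\[
	\sum_{n\ge 0} a_k(n)\,z^n=\frac{1-z^k}{1-2z+z^{k+1}}.
\]
The denominator has a single root $\rho_k$ near $\tfrac12$, with $\rho_k=\tfrac12\bigl(1+2^{-k-1}+O(k4^{-k})\bigr)$ and all other roots bounded away from it, so singularity analysis yields, with errors uniform in $k$ over the relevant range $k\asymp\log_2 n$,
\[
	\P(R_n<k)=e^{-n\,2^{-k-1}}\bigl(1+O(k^2 4^{-k})\bigr).
\]

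Next I would turn the first moment into a tail sum, $\E(R_n)=\sum_{k\ge 1}\bigl(1-\P(R_n<k)\bigr)$, whose principal part is the harmonic sum $\Lambda(n):=\sum_{k\ge 1}\bigl(1-e^{-n\,2^{-k-1}}\bigr)$, the multiplicative error contributing only $O(n^{-1}(\log n)^4)$ after summation. The Mellin transform of $\Lambda$ is $-\Gamma(s)\,2^{2s}/(1-2^s)$ on $-1<\Re s<0$; inverting and pushing the contour to $\Re s=1$ picks up a double pole at $s=0$ and simple poles at $s_m:=2\pi i m/\ln 2$ for $m\neq 0$. Using $\Gamma(s)=s^{-1}-\gamma+O(s)$ and $2^{2s}/(1-2^s)=-(s\ln 2)^{-1}-\tfrac32+O(s)$ — the constant $-\tfrac32$ being the sum of the $-\tfrac12$ coming from $s\ln 2/(2^s-1)$ and the extra $-1$ forced by the index shift $k\mapsto k+1$ that is encoded in the numerator $2^{2s}$ rather than $2^s$ — the $s=0$ residue contributes $\log_2 n+\gamma/\ln 2-\tfrac32$ to $\Lambda(n)$, while the residue at $s_m$ contributes the Fourier coefficient $-\Gamma(s_m)/\ln 2$. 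Since $\abs{\Gamma(iy)}^2=\pi/\bigl(y\sinh(\pi y)\bigr)$ we have $\abs{\Gamma(2\pi i/\ln 2)}\approx 5.5\times 10^{-7}$ and the higher coefficients decay geometrically, which gives $\abs{\eps_1}<2\times 10^{-6}$; the tail of the Mellin integral on $\Re s=1$ together with the summed approximation error is absorbed into the remainder $r_1(n)=O(n^{-1}(\log n)^4)$.

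For the variance I would run the same machinery on $\E(R_n^2)=\sum_{k\ge 0}(2k+1)\,\P(R_n>k)$. The weight $2k+1$ replaces the Dirichlet factor by $2^{2s}(2^s+1)/(1-2^s)^2$, so $s=0$ becomes a triple pole whose residue produces a $\log_2^2 n$ term, a $\log_2 n$ term, and a constant built from $\gamma$, $\gamma^2$ and $\pi^2/12$ (the last from the coefficient of $s$ in $\Gamma(1+s)$) together with the coefficients $1,-\tfrac12,\tfrac1{12}$ of $s\ln 2/(2^s-1)$. Subtracting $(\E R_n)^2=\log_2^2 n+2\bigl(\gamma/\ln 2-\tfrac32\bigr)\log_2 n+\cdots$ cancels the $\log^2 n$ and $\log n$ terms exactly and leaves $\tfrac1{12}+\pi^2/\bigl(6(\ln 2)^2\bigr)$. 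The poles at $s_m$ are now double, so their residues contain a $\log_2 n$-amplitude oscillation with coefficient $2\Gamma(s_m)/\ln 2$; this is cancelled by the cross term $2(\log_2 n)\,\eps_1(\log_2 n)$ coming from squaring $\E R_n$, and what survives is a genuine period-one $\eps_2(\log_2 n)$ assembled from the residual, constant-amplitude part of the $s_m$-residues — whose coefficients involve $\Gamma(s_m)$ and $\Gamma'(s_m)$ — together with $-2(\gamma/\ln 2-\tfrac32)\eps_1$ and the negligible $-\eps_1^2$. Bounding $\Gamma$ and the digamma function along $\Re s=0$ as above certifies $\abs{\eps_2}<10^{-4}$, and the larger powers of $k$ now appearing in the error estimates give the remainder $O(n^{-1}(\log n)^5)$.

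The main obstacle is bookkeeping rather than any single delicate estimate. One must (i) make the singularity analysis of $(1-z^k)/(1-2z+z^{k+1})$ genuinely uniform in $k$ across the whole range that contributes to the sums, not merely in the window $k\approx\log_2 n$, so that all the approximation errors really do sum to $O(n^{-1}\operatorname{polylog} n)$; (ii) carry every $O(1)$ term through the Laurent expansions so that the constants emerge as the exact values $-\tfrac32$ and $\tfrac1{12}+\pi^2/(6(\ln 2)^2)$ rather than only up to an undetermined additive constant; and (iii) justify the contour shifts and estimate $\abs{\Gamma}$ (and $\abs{\Gamma'}$) on the lines $\Re s=0$ and $\Re s=1$ sharply enough to certify the tiny amplitudes $2\times 10^{-6}$ and $10^{-4}$. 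Essentially all of the work lies in (i) and (ii); step (iii) is routine once (i) is in hand.
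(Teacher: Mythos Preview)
The paper does not prove Theorem~\ref{BoydThm} at all: it is quoted from \cite{Boyd} as a known result and then used as a black box in the proofs of Theorem~\ref{bumblebee} and Corollary~\ref{th1}. So there is no ``paper's own proof'' to compare against.

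That said, your sketch is the standard route to results of this type (generating function for $\P(R_n<k)$, dominant-root approximation yielding the double-exponential $e^{-n2^{-k-1}}$, then Mellin asymptotics of the resulting harmonic sum to expose the pole at $s=0$ and the line of imaginary poles at $2\pi i m/\ln 2$ responsible for the tiny periodic terms). This is essentially Boyd's own method, and also the treatment one finds in Knuth and in Flajolet--Sedgewick. Your identification of the constants and of where the work lies --- uniform-in-$k$ singularity analysis, exact bookkeeping of the Laurent coefficients, and the $\Gamma$-bounds on $\Re s=0$ --- is accurate. If anything, be aware that the literature contains several slightly different definitions of $R_n$ (longest run of heads only, versus longest run of either symbol) and that the additive constant $-\tfrac32$ is sensitive to which convention is in force; make sure your index shift in the numerator $2^{2s}$ matches the convention stated in the theorem before declaring the constant nailed down.
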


	\section{Computable forms of complexity}\label{sec:CFC}
		\subsection{Automatic complexity}
			Now the goal is to price the European/American option that pays the
			nondeterministic automatic complexity deficiency $D_n$ of the movements of a stock
			from time 0 to the time $n$ when the option is exercised.
			We suspect that finding the exact price is a computationally intractable problem, both because of
			the conjectured intractability of computing automatic complexity \cite{Kjos-EJC}, and
			because of the exponential number of price paths to consider.
			
			The interest rate $r$ can be set to 0 or to a positive value.
			For pedagogical reasons, \cite{Shreve}
			uses $r=1/4$ for his main recurring example, and we sometimes adopt that value as well.
			\begin{itemize}
				\item For $n=0$ the option would pay 0 as there are no simple strings,
					and moreover the situation is anyway already known at time $0$.
				\item For $n=1$ the actual string (0 or 1) is not known at time 0 but
					it does not affect the payoff,
					which is 0 either way as there are no simple strings of length 1.
				\item For $n=2$, with up-factor $u=2$, down-factor $d=\frac12$, and $r=1/4$,
					there is a risk-neutral probability of $1/2$ of one of the strings $00$, $11$,
					both of which pay \$1. So the value is
					\[
						(1+r)^{-2}\cdot \frac12\cdot 1 = \frac{16}{50}.
					\]
			\end{itemize}
			In general when the risk-neutral probabilities are $1/2$ each for up and down,
			then the value of the option is directly related to the distribution of the deficiency $D_n$:
			\[
				\sum_{d=0}^{n/2} d \cdot \mathbb P(D_n=d) \cdot (1+r)^{-n} = \E(D_n)\cdot (1+r)^{-n}.
			\]
			If $D_n$ happened to be Poisson for large $n$, this is approximately $\lambda (1+r)^{-n}$,
			which is decreasing in $n$. However, we have just seen that
			the value for $n=2$ is higher than for $n=0$ and $n=1$.

			\begin{rem}\label{amirQ}
				For an American version, one question is
				whether to exercise the option at time $n=2$ after having seen $00$.
				If we exercise we get \$1. Otherwise the deficiency can at most go up by 1 each time step,
				whereas the interest factor with $r=1/4>0$ is exponential,
				so an upper bound for our payoff is
				\[
					(n/2)(1+r)^{-n} = \frac{n}{2}e^{-n\ln(5/4)}.
				\]
				This expression is maximized at $n=4$ and at $n=5$. Both places it takes the value $.8192$.
			\end{rem}

			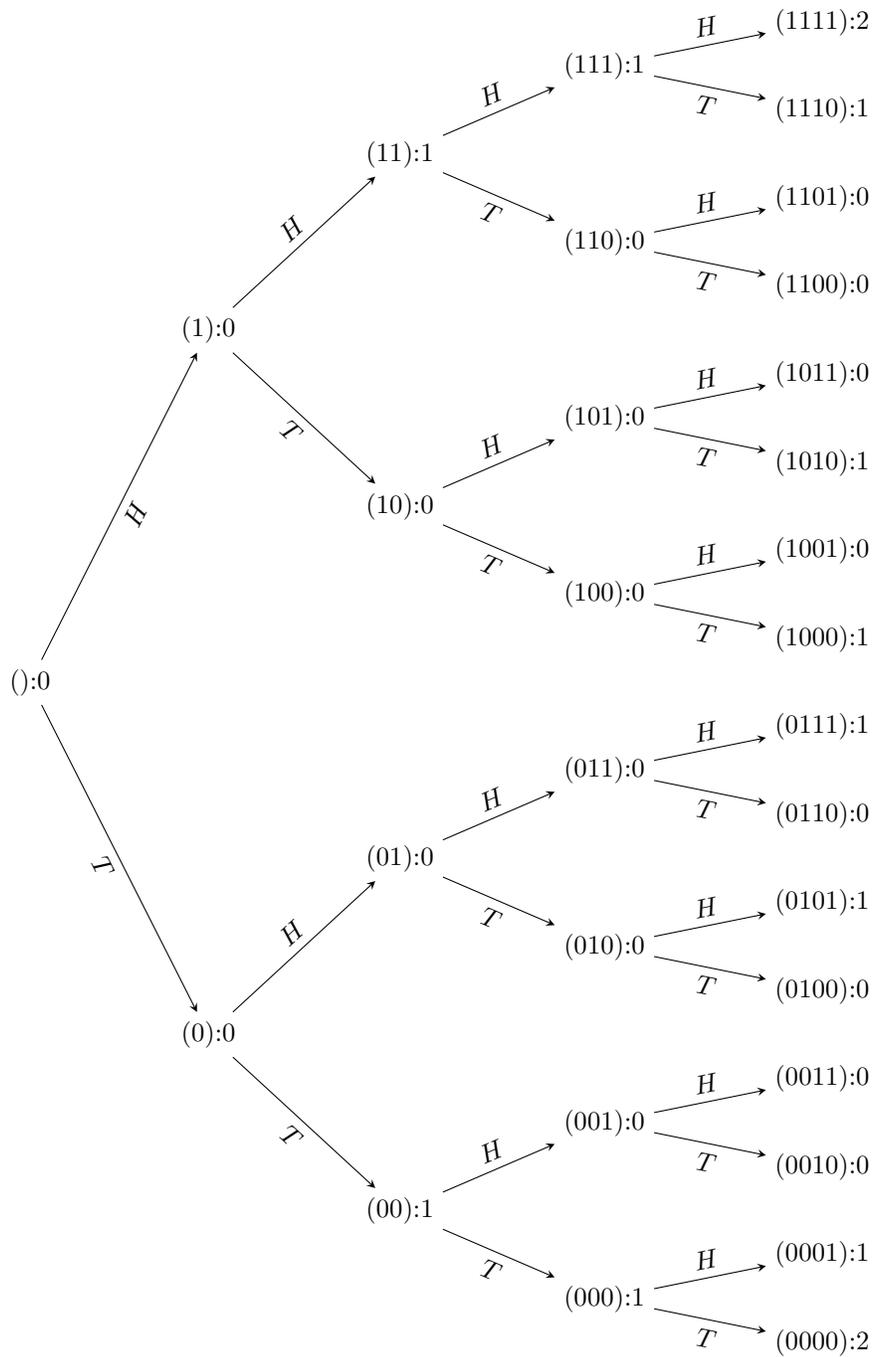
\begin{figure}
				\begin{tikzpicture}[>=stealth,sloped]
					\matrix (tree) [%
						matrix of nodes,
						column sep=1.5cm,
						row sep=0cm,
					]
					{
								&   	&		&   	& (1111):2\\
								&   	&		&(111):1&         \\
								&   	&		&   	& (1110):1\\
								&   	& (11):1&   	&         \\
								&   	&		&   	& (1101):0\\
								&   	&		&(110):0&         \\
								&   	&		&   	& (1100):0\\
								& (1):0 &		&   	&         \\
								&   	&		&   	& (1011):0\\
								&   	&		&(101):0&         \\
								&   	&		&   	& (1010):1\\
								&   	& (10):0&   	&         \\
								&   	&		&   	& (1001):0\\
								&   	&		&(100):0&         \\
								&   	&		&   	& (1000):1\\
						():0    &   	&		&   	&         \\
								&   	&		&   	& (0111):1\\
								&   	&		&(011):0&         \\
								&   	&		&   	& (0110):0\\
								&   	& (01):0&   	&         \\
								&   	&		&   	& (0101):1\\
								&   	&		&(010):0&         \\
								&   	&		&   	& (0100):0\\
								& (0):0 &		&   	&         \\
								&   	&		&   	& (0011):0\\
								&   	&		&(001):0&         \\
								&   	&		&   	& (0010):0\\
								&   	& (00):1&   	&         \\
								&   	&		&   	& (0001):1\\
								&   	&		&(000):1&         \\
								&   	&		&   	& (0000):2\\
					};
					\draw[->] (tree-2-4)  -- (tree-1-5)  node [midway,above] {$H$};
					\draw[->] (tree-2-4)  -- (tree-3-5)  node [midway,below] {$T$};
					\draw[->] (tree-4-3)  -- (tree-2-4)  node [midway,above] {$H$};
					\draw[->] (tree-4-3)  -- (tree-6-4)  node [midway,below] {$T$};
					\draw[->] (tree-6-4)  -- (tree-5-5)  node [midway,above] {$H$};
					\draw[->] (tree-6-4)  -- (tree-7-5)  node [midway,below] {$T$};
					\draw[->] (tree-8-2)  -- (tree-4-3)  node [midway,above] {$H$};
					\draw[->] (tree-8-2)  -- (tree-12-3) node [midway,below] {$T$};
					\draw[->] (tree-10-4) -- (tree-9-5)  node [midway,above] {$H$};
					\draw[->] (tree-10-4) -- (tree-11-5) node [midway,below] {$T$};
					\draw[->] (tree-12-3) -- (tree-10-4) node [midway,above] {$H$};
					\draw[->] (tree-12-3) -- (tree-14-4) node [midway,below] {$T$};
					\draw[->] (tree-14-4) -- (tree-13-5) node [midway,above] {$H$};
					\draw[->] (tree-14-4) -- (tree-15-5) node [midway,below] {$T$};
					\draw[->] (tree-16-1) -- (tree-8-2)  node [midway,below] {$H$};
					\draw[->] (tree-16-1) -- (tree-24-2) node [midway,below] {$T$};
					\draw[->] (tree-18-4) -- (tree-17-5) node [midway,above] {$H$};
					\draw[->] (tree-18-4) -- (tree-19-5) node [midway,below] {$T$};
					\draw[->] (tree-20-3) -- (tree-18-4) node [midway,above] {$H$};
					\draw[->] (tree-20-3) -- (tree-22-4) node [midway,below] {$T$};
					\draw[->] (tree-22-4) -- (tree-21-5) node [midway,above] {$H$};
					\draw[->] (tree-22-4) -- (tree-23-5) node [midway,below] {$T$};
					\draw[->] (tree-24-2) -- (tree-20-3) node [midway,above] {$H$};
					\draw[->] (tree-24-2) -- (tree-28-3) node [midway,below] {$T$};
					\draw[->] (tree-26-4) -- (tree-25-5) node [midway,above] {$H$};
					\draw[->] (tree-26-4) -- (tree-27-5) node [midway,below] {$T$};
					\draw[->] (tree-28-3) -- (tree-26-4) node [midway,above] {$H$};
					\draw[->] (tree-28-3) -- (tree-30-4) node [midway,below] {$T$};
					\draw[->] (tree-30-4) -- (tree-29-5) node [midway,above] {$H$};
					\draw[->] (tree-30-4) -- (tree-31-5) node [midway,below] {$T$};
				\end{tikzpicture}
				\caption{
					Deficiency tree for $n=4$, see Remark \ref{MFE-remark}.
				}\label{deficiency-tree}
			\end{figure}
			\begin{figure}
				\begin{tikzpicture}[>=stealth,sloped]
					\matrix (tree) [%
						matrix of nodes,
						minimum size=0.4cm,
						column sep=2.0cm,
						row sep=0.12cm,
						style={nodes={draw,rounded corners}},
					]
					{
								&   	&		&   & 2 \\
								&   	&		&1.2&   \\
								&   	&		&   & 1 \\
								&   	& 1     &   &   \\
								&   	&		&   & 0 \\
								&   	&		& 0 &   \\
								&   	&		&   & 0 \\
								& 0.528 &		&   &   \\
								&   	&		&   & 0 \\
								&   	&		&0.4&   \\
								&   	&		&   & 1 \\
								&   	& 0.32  &   &   \\
								&   	&		&   & 0 \\
								&   	&		&0.4&   \\
								&   	&		&   & 1 \\
						0.4224  &   	&		&   &   \\
								&   	&		&   & 1 \\
								&   	&		&0.4&   \\
								&   	&		&   & 0 \\
								&   	& 0.32  &   &   \\
								&   	&		&   & 1 \\
								&   	&		&0.4&   \\
								&   	&		&   & 0 \\
								& 0.528 &		&   &   \\
								&   	&		&   & 0 \\
								&   	&		& 0 &   \\
								&   	&		&   & 0 \\
								&   	& 1     &   &   \\
								&   	&		&   & 1 \\
								&   	&		&1.2&   \\
								&   	&		&   & 2 \\
					};
					\draw[->] (tree-2-4)  -- (tree-1-5)  node [midway,above] {$H$};
					\draw[->] (tree-2-4)  -- (tree-3-5)  node [midway,below] {$T$};
					\draw[->] (tree-4-3)  -- (tree-2-4)  node [midway,above] {$H$};
					\draw[->] (tree-4-3)  -- (tree-6-4)  node [midway,below] {$T$};
					\draw[->] (tree-6-4)  -- (tree-5-5)  node [midway,above] {$H$};
					\draw[->] (tree-6-4)  -- (tree-7-5)  node [midway,below] {$T$};
					\draw[->] (tree-8-2)  -- (tree-4-3)  node [midway,above] {$H$};
					\draw[->] (tree-8-2)  -- (tree-12-3) node [midway,below] {$T$};
					\draw[->] (tree-10-4) -- (tree-9-5)  node [midway,above] {$H$};
					\draw[->] (tree-10-4) -- (tree-11-5) node [midway,below] {$T$};
					\draw[->] (tree-12-3) -- (tree-10-4) node [midway,above] {$H$};
					\draw[->] (tree-12-3) -- (tree-14-4) node [midway,below] {$T$};
					\draw[->] (tree-14-4) -- (tree-13-5) node [midway,above] {$H$};
					\draw[->] (tree-14-4) -- (tree-15-5) node [midway,below] {$T$};
					\draw[->] (tree-16-1) -- (tree-8-2)  node [midway,below] {$H$};
					\draw[->] (tree-16-1) -- (tree-24-2) node [midway,below] {$T$};
					\draw[->] (tree-18-4) -- (tree-17-5) node [midway,above] {$H$};
					\draw[->] (tree-18-4) -- (tree-19-5) node [midway,below] {$T$};
					\draw[->] (tree-20-3) -- (tree-18-4) node [midway,above] {$H$};
					\draw[->] (tree-20-3) -- (tree-22-4) node [midway,below] {$T$};
					\draw[->] (tree-22-4) -- (tree-21-5) node [midway,above] {$H$};
					\draw[->] (tree-22-4) -- (tree-23-5) node [midway,below] {$T$};
					\draw[->] (tree-24-2) -- (tree-20-3) node [midway,above] {$H$};
					\draw[->] (tree-24-2) -- (tree-28-3) node [midway,below] {$T$};
					\draw[->] (tree-26-4) -- (tree-25-5) node [midway,above] {$H$};
					\draw[->] (tree-26-4) -- (tree-27-5) node [midway,below] {$T$};
					\draw[->] (tree-28-3) -- (tree-26-4) node [midway,above] {$H$};
					\draw[->] (tree-28-3) -- (tree-30-4) node [midway,below] {$T$};
					\draw[->] (tree-30-4) -- (tree-29-5) node [midway,above] {$H$};
					\draw[->] (tree-30-4) -- (tree-31-5) node [midway,below] {$T$};
				\end{tikzpicture}
				\caption{
					Option prices corresponding to Figure \ref{deficiency-tree}.
				}\label{option-price-tree}
			\end{figure}
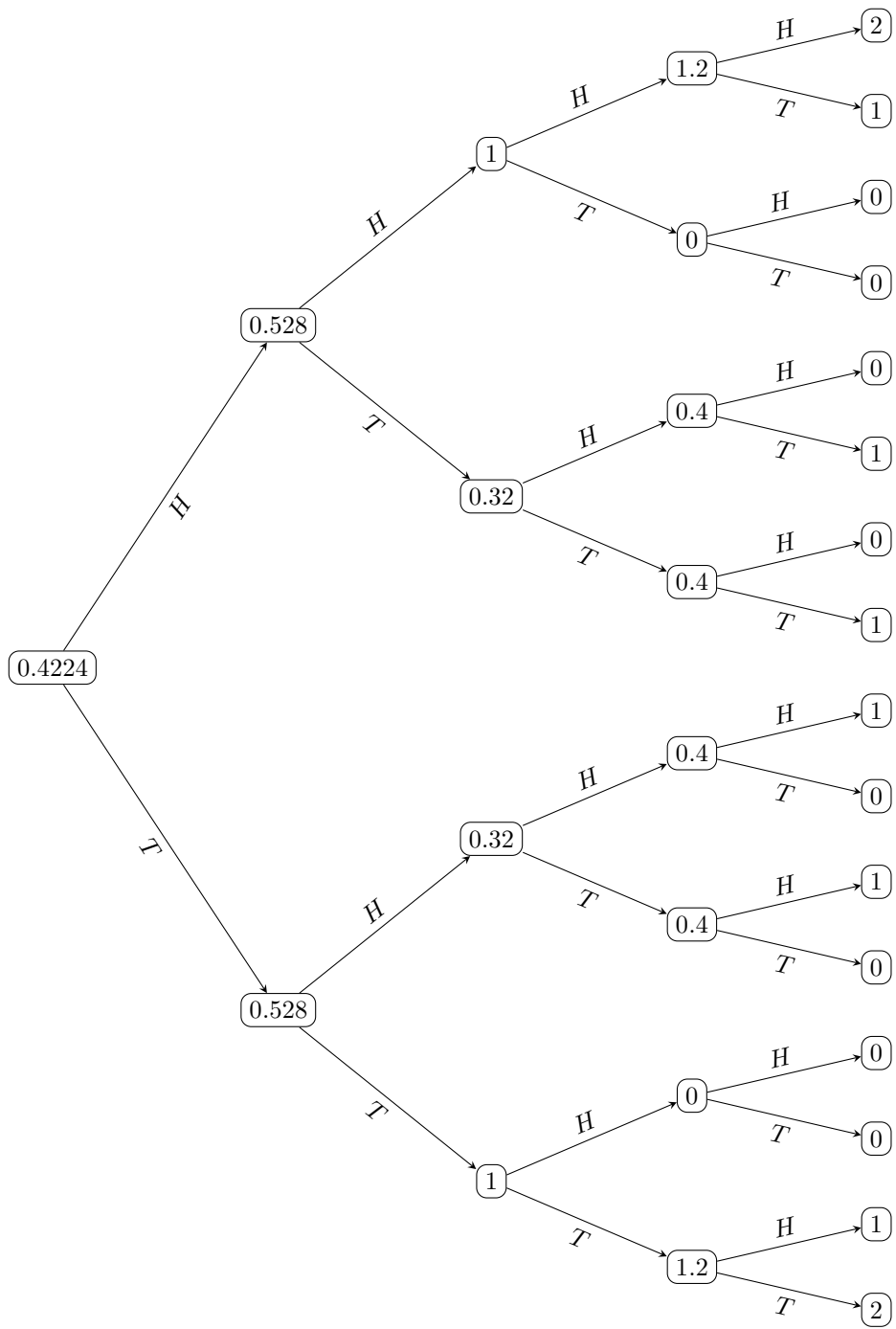

			\begin{table}
				\begin{center}
				\begin{tabular}{|r|l|c|l|}
					\hline
					Length&$\E D_n$& $\le$ & $V_n$\\
					\hline
					 0	&	$0$    	& $=$ &  	$0$\\
					 2	&	$0.5$  	& $=$ &  	$0.5$\\
					 4	&	$0.625$	& $<$ &  	$0.75$\\
					 6	&	$0.687$	& $<$ &  	$0.875$\\
					 8	&	$0.765$	& $<$ &  	$1.070$\\
					10	&	$0.791$	& $<$ &  	$1.191$\\
					12	&	$0.720$	& $<$ &  	$1.236$\\
					\hline
				\end{tabular}
			\end{center}
				\caption{
					Static versus dynamic exercise policies for nondeterministic automatic complexity.
				}
				\label{static}
			\end{table}

			To obtain a reasonable level of abstraction
			it is valuable to consider infinite price paths
			and associate a finite complexity deficiency with them. We can do so if
			the nondeterministic automatic complexity deficiencies of prefixes of an infinite binary sequence
			are almost surely bounded (Conjecture \ref{oslo}; see also Table \ref{priceOverview}).

			\begin{con}\label{oslo}
				For nondeterministic automatic complexity $A_N$,
				\[
					\mathbb P(\sup_n D_n<\infty) = 1,
				\text{ and yet}\quad
					\sup_n V_n = \infty.
				\]
			\end{con}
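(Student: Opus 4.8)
The plan is to split the conjecture into two essentially independent probabilistic assertions about the sequence $D_n(X\restrict n)$, $n\in\mathbb N$, under the fair‑coin measure, and then handle these with the first Borel--Cantelli lemma and with an explicit (indeed effective) exercise policy. As a preliminary normalization I would note that $V_n$ is nondecreasing in $n$ (an option with expiry $n+1$ dominates one with expiry $n$, since $D_n\ge 0$ here), that $V_n\le V$, and that for any stopping time $\tau$ with valued in $\mathbb N\cup\{\infty\}$ and payoff $D_\tau\mathbf 1_{\tau<\infty}$ we have, by Fatou's lemma applied to $D_{\tau\wedge n}\to D_\tau\mathbf 1_{\tau<\infty}$, the bound $\E[D_\tau\mathbf 1_{\tau<\infty}]\le\liminf_n \E[D_{\tau\wedge n}]\le\lim_n V_n$; taking the supremum over $\tau$ gives $V\le\lim_n V_n$, hence $\sup_n V_n=\lim_n V_n=V$. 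So the conjecture reduces to: (i) $\sum_n\mathbb P\bigl(D_n(X\restrict n)\ge j\bigr)<\infty$ for some constant $j$ --- which by Borel--Cantelli forces $D_n(X\restrict n)\ge j$ for only finitely many $n$ almost surely, hence $\sup_n D_n<\infty$ a.s.; and (ii) $V=\infty$, for which it suffices to exhibit stopping times $\tau_k$ with $\E[D_{\tau_k}]\to\infty$. The canonical candidate is $\tau_k=\min\{n:D_n(X\restrict n)\ge k\}$ (no exercise when the set is empty), which is \emph{computable} because $A_N$ is single‑exponential‑time computable, so the exercise policy is effective; it achieves $\E[D_{\tau_k}]\ge k\,\mathbb P(\sup_n D_n\ge k)$.

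For (i) the content is combinatorial: one must bound $N_j(n)=\#\{x\in 2^n: A_N(x)\le b(n)-j\}$, the number of length‑$n$ strings of low nondeterministic automatic complexity, and show $\sum_n N_j(n)2^{-n}<\infty$ once $j$ is large. The natural route is a structure theorem for witnessing automata: an NFA with only $b(n)-j$ states whose unique length‑$n$ accepting path has $n+1$ vertices must traverse a short cycle many times, which should translate into $x$ having a long factor that is (close to) a power $w^{e}$ with $(e-1)|w|$ growing linearly in $j$, and confined to a restricted range of positions. Counting such $x$ by a union bound over period, exponent and location should then give $N_j(n)\le p(n)\,2^{\,n-cj}$ for a polynomial $p$ and a constant $c>0$, which is summable over $n$ for $j$ large. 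The delicate point is calibration: the theorem must beat $2^n$ for large $j$ while remaining consistent with the experimental fact that $N_1(n)\approx 2^{n-1}$ (deficiency $1$ is ubiquitous), so the repetitive phenomenon responsible for large deficiency has to be a genuinely \emph{global} one and not just the presence of a short repeated factor.

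For (ii) the goal is to show that, although $\sup_n D_n$ is a.s.\ finite, the policy $\tau_k$ still collects unbounded expected payoff, i.e.\ $\sup_k k\,\mathbb P(\sup_n D_n\ge k)=\infty$; equivalently $q_k:=\mathbb P(\sup_n D_n\ge k)$ decays strictly more slowly than $1/k$. (A diagonal policy $\tau=\min\{n:D_n(X\restrict n)\ge f(n)\}$ with $f\to\infty$ slowly enough may weaken what is needed here, at the cost of showing such a $\tau$ is a.s.\ finite.) The mechanism should be the one of Remark \ref{referee6} via Theorem \ref{BoydThm}: long runs --- and more generally long repeated or near‑periodic factors --- occur in $X\restrict n$ at scale $\sim\log_2 n$, and if a suitably placed factor of length $\ell$ provably lowers $A_N$ by an unbounded amount $g(\ell)$, then waiting for $D_n$ to reach any prescribed level succeeds almost surely, at a sufficiently heavy‑tailed random time. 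Here the tension of part (i) reappears from the other side: since $D_n(X\restrict n)\ge k$ can occur for only finitely many $n$, the triggering structures cannot recur almost surely, so one must have $g(\ell)=o(\ell)$ and must extract exactly the \emph{sub‑linear} compression that a length‑$\ell$ repetition buys in $A_N$.

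I expect the pair of quantitative automata‑combinatorial lemmas --- an upper bound on $N_j(n)$ for part (i) and a lower bound on the $A_N$‑compression forced by a localized repetition for part (ii) --- to be the main obstacle; the remaining ingredients (the Borel--Cantelli step, the Fatou identity $\sup_n V_n=V$, and the effectiveness of $\tau_k$) are routine, and the principal conceptual hurdle is precisely to reconcile these two estimates: the same family of repetitions must be frequent enough that the associated stopping times are a.s.\ finite with unbounded expected payoff, yet rare enough in the $n\to\infty$ limit to keep $\sup_n D_n$ a.s.\ finite.
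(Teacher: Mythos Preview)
The statement you are attempting to prove is a \emph{Conjecture} in the paper, not a theorem: the paper offers no proof whatsoever, only tentative numerical evidence (Table~\ref{static} and Remark~\ref{MFE-remark}) for both halves. So there is no proof to compare against, and your proposal should be read as a research strategy rather than a verification of something known.

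As a strategy it is reasonable in outline --- the split into a Borel--Cantelli step for almost-sure boundedness and an explicit stopping-time step for $\sup_n V_n=\infty$ is the natural decomposition, and your identification $\sup_n V_n=V$ via Fatou is correct and useful. But the proposal is not a proof, and you say so yourself: the two automata-combinatorial lemmas you isolate are precisely the open content of the conjecture, and you do not supply them.

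There are also two concrete slips worth flagging. First, the bound you write down for part~(i), $N_j(n)\le p(n)\,2^{\,n-cj}$, does \emph{not} make $\sum_n \mathbb P(D_n\ge j)\le 2^{-cj}\sum_n p(n)$ summable over $n$ for any fixed $j$; you need decay in $n$ as well (something like $N_j(n)\le 2^{\,n-g(j,n)}$ with $\sum_n 2^{-g(j,n)}<\infty$ for large $j$), so the form of the estimate must be reconsidered. Second, your fallback ``diagonal'' policy $\tau=\min\{n:D_n\ge f(n)\}$ with $f\to\infty$ cannot help once part~(i) holds: if $\sup_n D_n<\infty$ almost surely then $D_n<f(n)$ for all large $n$, so this $\tau$ is a.s.\ infinite and yields nothing. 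You are therefore forced, as you note, into the tail condition $\sup_k k\,\mathbb P(\sup_n D_n\ge k)=\infty$; be aware that this is strictly stronger than $\E[\sup_n D_n]=\infty$ (take $q_k\asymp 1/(k\log k)$), so even establishing the latter --- which is already necessary for $V=\infty$ by Theorem~\ref{three} --- would not close part~(ii) along this route without further work.
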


			\begin{rem}\label{MFE-remark}
				\cite{MFE} studied
				a perpetual American option that pays the complexity deficiency of the sequence of up and down ticks
				(considered as 1s and 0s) upon exercise.
				With interest rate set to zero ($r=0$) the price of this security may be infinity, based on tentative numerical evidence.
				That is, for $A_N$,
				\[
					\sup_n V_n = \infty,
				\]
				although $\E D_n$ seems to approach a finite limit (see Table \ref{static}).
				For positive interest rates the price is finite (see Remark \ref{amirQ}).
				They found numerical evidence that for $r=1/4$ the price is $0.47$.
				See Figure \ref{deficiency-tree} for the deficiencies of strings of length at most 4,
				and Figure \ref{option-price-tree} for corresponding calculated option prices.
				The price of the American option with expiry $2k$ and expiry $2k+1$ are the same,
				as is easy to prove.
			\end{rem}

			\begin{df}
				Let $W_n$ be the price of the European option paying the
				nondeterministic automatic complexity deficiency $D(x)$ for the price path $x$ of length $n$.

				\noindent\emph{Decision problem:} $\mathrm{PRICE}$.

				\noindent\emph{Instance:} A pair of nonnegative integers $n$ and $k$ with
				\[
					0\le \frac{k}{2^n}\le\lfloor n/2\rfloor +1.
				\]

				\noindent\emph{Question:} Is $W_n\ge k/2^n$?
			\end{df}
			Recall that $\mathrm{E}$ is the class of single-exponential time decidable decision problems.
			\begin{thm}\label{referee}
				$\mathrm{PRICE}$ is in $\mathrm{E}$.
			\end{thm}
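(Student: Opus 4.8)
The plan is, first, to remove the ``$\max$'' from the definition of $W_n$. By Theorem \ref{Kayleigh} we always have $A_N(x)\le b(n)=\lfloor n/2\rfloor+1$, so $D_n(x)=b(n)-A_N(x)\ge 0$ for every $x$ of length $n$, whence
\[
	W_n=\E\bigl(\max\{D_n,0\}\bigr)=\E(D_n)=b(n)-\frac1{2^n}\sum_{x\in\{0,1\}^n}A_N(x).
\]
Thus $2^nW_n$ is an integer between $0$ and $2^n\,b(n)$, and the question ``$W_n\ge k/2^n$'' is precisely the integer inequality $\sum_{x\in\{0,1\}^n}A_N(x)\le 2^n\,b(n)-k$. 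Both sides have $O(n)$ bits, so once the sum $S:=\sum_x A_N(x)$ is computed the decision — including assembling the right-hand side from the inputs $n,k$ — costs only $\mathrm{poly}(n+\log k)$. Everything therefore reduces to computing $S$.

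Second, I would compute $S$ by a direct double loop: for each of the $2^n$ strings $x$ of length $n$, compute $A_N(x)$ and accumulate. For a single $x$, I would try the candidate state-counts $q=1,2,\dots,b(n)$ in increasing order (the range $q\le b(n)$ suffices, again by Theorem \ref{Kayleigh}) and, for each $q$, enumerate the NFAs on $\le q$ states over $\{0,1\}$ — a transition relation is a subset of a set of size $2q^2$, with $q$ choices of initial state and $2^q$ choices of accepting set, so $2^{O(q^2)}=2^{O(n^2)}$ automata in all — returning the least $q$ for which some such $M$ both accepts $x$ and has exactly one accepting path of length $n$; correctness of this search is just Definition \ref{precise}. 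Since each $A_N(x)$ is then computed in time $2^{O(n^2)}$ and there are $2^n$ strings, $S$, and hence $\mathrm{PRICE}$, is decidable in single-exponential time. One can trim the enumeration — a minimal witness may be assumed to consist only of the states and transitions traversed by its (unique) accepting run, hence to be encoded by a surjection $\{0,\dots,n\}\to\{1,\dots,q\}$, cutting the count to $2^{O(n\log n)}$ — or one may simply invoke the single-exponential-time computability of $A_N$ noted in \cite{Kjos-EJC}; neither refinement is needed for membership in $\mathrm{E}$.

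The step I expect to need the most care is the ``exactly one accepting path of length $n$'' test, which is about walks in the transition graph, not about accepted words. I would count the walks of length $n$ from the initial state to an accepting state by forming the $q\times q$ matrix $B$ whose $(s,t)$ entry is the number of letters $a$ with $(s,a,t)$ a transition of $M$, computing $B^n$ by repeated squaring, and reading off $\sum_{t\text{ accepting}}(B^n)_{\mathrm{init},t}$; the entries have $O(n)$ bits, so this is polynomial. Then $M$ is a witness exactly when this count equals $1$ and $M$ accepts $x$, the latter forcing the unique length-$n$ path to be the one spelling $x$. I expect this test, together with the routine check that every integer manipulated stays within $2^{\mathrm{poly}(n)}$ digits, to be the only places a slip is possible; the conceptual content is entirely in the identity $W_n=\E(D_n)$ and the reduction of $\mathrm{PRICE}$ to comparing the integer $\sum_x A_N(x)$ with $2^n\,b(n)-k$.
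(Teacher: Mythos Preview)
Your overall approach matches the paper's: evaluate $A_N$ on each of the $2^n$ strings of length $n$ and aggregate. Your simplification $W_n=\E(D_n)$ via Theorem~\ref{Kayleigh}, and the clean reduction of the question to the integer comparison $\sum_x A_N(x)\le 2^n\,b(n)-k$, are in fact tidier than the paper's appeal to ``the usual backwards recursive algorithm''; your matrix-power test for the unique-path condition is also correct.

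The gap is in the running-time claim. The class $\mathrm{E}$ is $\bigcup_{c}\mathrm{DTIME}(2^{cn})$, i.e., time $2^{O(n)}$ in the input size---not $2^{\mathrm{poly}(n)}$. Your brute-force NFA enumeration produces $2^{O(q^2)}=2^{O(n^2)}$ machines per string, and your trimmed enumeration of state sequences still gives $q^{n+1}=2^{O(n\log n)}$; neither is $2^{O(n)}$, so neither places $\mathrm{PRICE}$ in $\mathrm{E}$. The sentence ``neither refinement is needed for membership in $\mathrm{E}$'' is therefore exactly backwards: you \emph{must} invoke the $2^{O(n)}$ bound for computing $A_N$ from \cite{Kjos-EJC} (equivalently, that $\mathrm{DEFICIENCY}\in\mathrm{E}$). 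With that in hand, $2^n$ calls cost $2^{O(n)}$ in total and your argument is complete---and this citation is precisely what the paper's own proof rests on.
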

			\begin{proof}
				\cite{Kjos-EJC} considered the problem $\mathrm{DEFICIENCY}$ of deciding whether,
				given an integer $k$ and a sequence $x$,
				the nondeterministic automatic complexity deficiency $D(x)$ satisfies $D(x)\ge k$.
				They showed that $\mathrm{DEFICIENCY}$ is in $\mathrm{E}$.
				Since there are only single-exponentially many price paths of length $n$,
				the usual backwards recursive algorithm for
				option pricing in the binomial model \cite{Shreve} gives the theorem.
			\end{proof}
			The same proof shows that
			the analogous statement to Theorem \ref{referee} for American options holds as well.

		\subsection{Run complexity}\label{Malihe}
			If the payoff of our option is just the longest run of heads then \cite{Alikhani} showed that
			the price of the option is $\Theta(\log_2 n)$.
			This corresponds to automata that always proceed to a fresh state,
			except that one state may be repeated (namely, the state of the longest run).

			\begin{df}\label{df:runComplexity}
				The \emph{run complexity} $C_R$ of a binary sequence $x$ is defined by $C_R(x)=n+1-r$,
				where $n$ is the length of $x$ and $r$ is the length of the longest run of 0s or 1s in $x$.
			\end{df}
			This complexity notion has the advantage that it is efficiently computable.
			\cite{Kjos-COCOA} studied it in more detail
			and also considered multiple runs, as in the Wald--Wolfowitz runs test.

			In the rest of this subsection we give the argument of \cite{Alikhani}.
			We assume familiarity with basic discrete options \cite{Shreve}.
			A coin tossing sequence is $\omega_1\dots\omega_N$ where each $\omega_i\in\{H,T\}$.
			(Read $H$ as ``heads'' and $T$ as ``tails''.)
			\begin{df}
				For each $0\le n\le N$, the \emph{current run of heads} in the coin tossing sequence $\omega_1\dots\omega_n$ is defined by
				\[
					G_n(\omega)=\max \{r: \omega_{n-r+1} = \dots = \omega_n = H\}.
				\]
				The \emph{run option} is the American option where the payoff when exercised at time $n\le N$ is
				$G_n(\omega)$.
				Let $V^A $ be the price of the run option.
				Define a stopping time $\tau_t$ by
				\[
					\tau_t(\omega_1 \dots \omega_N)=\min\{s: G_s=[\E(R_N)] - t\},
				\]
				where $R_N$ is the longest run of heads in a coin tossing sequence of length $N$.
			\end{df}
			Thus, the trading strategy corresponding to $\tau_t$ is to wait for a run of heads
			that is almost as long as we ever expect to see before time $N$,
			with ``almost'' being qualified and measured by the parameter $t$.
			\begin{df}
				Let $[x]$ denote the nearest integer of $x$. In particular,
				$[x]$ is an integer $k$ with $k-1\le x\le k+1$.
			\end{df}				

			\begin{thm}\label{bumblebee}
				Given $N$ there is a deterministic choice of $t=t_N$ such that
				there is a sequence of numbers $\varepsilon_N$ with
				$\lim_{N \to +\infty} \varepsilon_N=0$, and
				constants $c_2$ and $c$, such that for large $N$,
				\[
					\E(G_{\tau_{t_{N}}})\geq (\log_2N-c_2- c \sqrt[3] {\ln N})(1-\varepsilon_N).
				\]
			\end{thm}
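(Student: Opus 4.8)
The plan is to combine the sharp asymptotics for $\E(R_N)$ and $\Var(R_N)$ from Theorem~\ref{BoydThm} with a one-sided Chebyshev estimate, choosing the slack parameter $t=t_N$ to grow just fast enough that the stopping time $\tau_{t_N}$ succeeds with probability tending to $1$, while the target run length $[\E(R_N)]-t_N$ is still $\log_2 N$ minus a lower-order term. First I would set $\mu_N=[\E(R_N)]$ and record, via Theorem~\ref{BoydThm}, that $\mu_N=\log_2 N+O(1)$ and $\Var(R_N)=O(1)$; call the variance bound $\sigma^2\le c_1$. The event on which the strategy $\tau_{t_N}$ actually pays $\mu_N-t_N$ is precisely the event $\{R_N\ge \mu_N-t_N\}$, since $G_n$ attains its running maximum $R_N$ at some $n\le N$, so on that event $G_{\tau_{t_N}}=\mu_N-t_N$ exactly, and on the complementary event $G_{\tau_{t_N}}\ge 0$.

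Next I would estimate $\P(R_N<\mu_N-t_N)$. Since $\mu_N$ is within $1$ of $\E(R_N)$, we have $\mu_N-t_N\le \E(R_N)-(t_N-1)$, so $\{R_N<\mu_N-t_N\}\subseteq\{\,|R_N-\E(R_N)|>t_N-1\,\}$, and Chebyshev gives
\[
	\P(R_N<\mu_N-t_N)\le \frac{\Var(R_N)}{(t_N-1)^2}\le \frac{c_1}{(t_N-1)^2}.
\]
Therefore
\[
	\E(G_{\tau_{t_N}})\ge (\mu_N-t_N)\,\P(R_N\ge \mu_N-t_N)\ge (\mu_N-t_N)\left(1-\frac{c_1}{(t_N-1)^2}\right).
\]
Now choosing $t_N=\lceil (\ln N)^{1/3}\rceil+1$ makes the error probability $\varepsilon_N:=c_1/(t_N-1)^2=O((\ln N)^{-2/3})\to 0$, while $\mu_N-t_N=\log_2 N - c_2 - c\sqrt[3]{\ln N}$ for suitable constants $c_2,c$ (absorbing the $O(1)$ terms from Theorem~\ref{BoydThm} and the ceiling). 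Substituting these into the displayed inequality yields exactly the claimed bound $\E(G_{\tau_{t_N}})\ge (\log_2 N-c_2-c\sqrt[3]{\ln N})(1-\varepsilon_N)$, and $t_N$ is manifestly a deterministic function of $N$.

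The only genuinely delicate point is the bookkeeping of constants: one must check that for all large $N$ the quantity $\mu_N-t_N$ is nonnegative (so that the inequality $\E(G_{\tau_{t_N}})\ge (\mu_N-t_N)\P(\cdots)$ is not vacuous or reversed) and that the $O(1)$ error terms in $\E(R_N)$, together with the nearest-integer rounding in the definition of $[x]$, really can be dominated by a single constant $c_2$ plus the $c\sqrt[3]{\ln N}$ term; this is routine but is where a careless reading of Theorem~\ref{BoydThm} could go wrong. A secondary subtlety worth a sentence is confirming that $\tau_{t_N}$ is a bona fide stopping time, i.e.\ that the target level $\mu_N-t_N$ depends only on $N$ and not on $\omega$, which is immediate from the definitions. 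Everything else is a direct substitution, so I would not expect the cube-root exponent to require anything sharper than second-moment control — it is simply the rate at which $t_N$ must grow to kill the Chebyshev tail while costing only a lower-order amount of run length.
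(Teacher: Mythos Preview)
Your proof is correct and follows essentially the same route as the paper: bound the failure probability $\P(R_N<[\E(R_N)]-t)$ via Chebyshev using Boyd's variance estimate, and take $t$ of order $(\ln N)^{1/3}$. The only difference is that the paper \emph{optimizes} the lower bound $(a-t-1)\bigl(1-4/(t-1)^2\bigr)$ over $t$ by explicitly solving the resulting cubic (with Mathematica) and then observing that the optimizer satisfies $t_N\in\Theta\bigl((\ln N)^{1/3}\bigr)$, whereas you simply \emph{set} $t_N=\lceil(\ln N)^{1/3}\rceil+1$ from the outset; your shortcut is cleaner and fully sufficient for the theorem as stated, while the paper's optimization additionally identifies the sharpest constant this method can give.
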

			\begin{proof}
				Let $S_n$ be the set of stopping times taking values in $\{n, n+1,\dots,\infty\}$
				\cite[Section 4.4]{Shreve}.
				The price process $V_n^A$ for the run option satisfies the American risk-neutral formula
				\[
					V_n^A=\max_{\tau \in S_n} {\E_n}[\mathbb{I}_{\tau \leq N} G_{\tau}],
					\quad \text{for } n=0,1,\dots,N.
				\] 
				So for each $t$,
				\[
					V_n^A \geq {\E_n}[(\mathbb{I}_{\tau_t \leq N}) G_{\tau_t}].
				\]
				Now we find a lower bound on $\E(G_{\tau_t})$.
				\begin{align*}
					\E(G_{\tau_{t}})
					&=  \E(G_{\tau_{t}}| G_{\tau_{t}} > 0) \Pr(G_{\tau_{t}} > 0)\\
					&=  ([\E(R_N)] - t)(\Pr\{R_N \ge [\E(R_N)] - t  \})\\
					&\ge ([\E(R_N)] - t)(\Pr\{R_N \ge \E(R_N) - t + 1 \})\\
					&\ge ([\E(R_N)] - t)(\Pr\{|R_N- \E(R_N)| \leq (t-1)\})\\
					&\ge ([\E(R_N)] - t)  \left(1-\frac{\Var(R_N)}{(t-1)^2}\right)
					\qquad\text{(by Chebyshev's Inequality)}\\
					&\ge (\E(R_N) - 1 - t)\left(1-\frac{\Var(R_N)}{(t-1)^2}\right).
				\end{align*}
				By Theorem \ref{BoydThm},
				\[
					\Var(R_N) = \pi ^2/6\ln^2(2)+1/12+r_2(N)+\varepsilon_2(N) \leq 4
				\]
				for large $N$.
				Let $\E(R_N)=a$; then we get
				\begin{equation}\label{star}
					\E(G_{\tau_t})\ge (a-t-1)\left(1-\frac4{(t-1)^2}\right).
				\end{equation}
				Now we find the $t=t_N$ such that
				the right-hand side of (\ref{star}) is maximized.
				The corresponding third degree polynomial has negative discriminant.
				Therefore it has one real root, which was calculated by $\mathrm{Mathematica}$:
				\[
					t=\frac{
						\left(\frac{2}{3}\right)^{2/3}\sqrt[3]{
							9\ln^2(2)\ln (N)+\sqrt{3} \sqrt{27 \ln^4(2) \ln^2(N)+4 \ln^6(2)}
						}
					}{\ln (2)}
				\]
				\[
					-\frac{2\sqrt[3]{\frac{2}{3}} \ln(2)}{
						\sqrt[3]{9 \ln^2(2) \ln(N)+\sqrt{3} \sqrt{27 \ln ^4(2) \log^2(N)+4 \ln ^6(2)}}
					}.
				\]
				By the second derivative test, since
				\[
					\frac{d^2}{dt^2} (a-t)(1-4/t^2) = -3t^2-4 \leq 0,
				\]
				we see that $t$ maximizes the right-hand side of ($*$).
				We have
				\[
					\lim_{n\to\infty}-\frac{2\sqrt[3]{\frac{2}{3}} \ln (2)}
					{\sqrt[3]{9 \ln ^2(2) \ln (n)+\sqrt{3} \sqrt{27 \ln ^4(2) \ln ^2(N)+4 \ln ^6(2)}}} = 0
				\]
				and hence
				\[
					\frac{t}{(4/\ln 2)^{1/3}( \sqrt[3] {\ln N})}\rightarrow 1.
				\]
				Therefore, $t = t_N \in \Theta ( \sqrt[3] {\ln N})$ and so
				\[
					\E(G_{\tau_{t_{N}}}) \geq (\log_2N-c_2- c\sqrt[3] {\ln N})(1-\varepsilon_N).
				\]
			\end{proof}
			\begin{cor}\label{th1}
				$V^{A} \sim \log_2N$.
			\end{cor}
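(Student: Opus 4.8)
The plan is to sandwich $V^{A}$ between two quantities each asymptotic to $\log_2 N$, and the lower half of the sandwich is already supplied by Theorem~\ref{bumblebee}. Since $\tau_{t_N}\in S_0$ is a genuine stopping time, the American risk-neutral pricing formula gives $V^{A}=V_0^{A}\ge \E[\mathbb{I}_{\tau_{t_N}\le N}\,G_{\tau_{t_N}}]$, and, exactly as in the proof of Theorem~\ref{bumblebee}, this quantity is bounded below by $\E(G_{\tau_{t_N}})\ge (\log_2 N-c_2-c\sqrt[3]{\ln N})(1-\varepsilon_N)$. Because $\sqrt[3]{\ln N}=o(\log_2 N)$ and $\varepsilon_N\to 0$, dividing by $\log_2 N$ and letting $N\to\infty$ yields $\liminf_N V^{A}/\log_2 N\ge 1$.

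For the matching upper bound I would use the pointwise inequality $\mathbb{I}_{\tau\le N}\,G_{\tau}\le R_N$, valid for every stopping time $\tau$ taking values in $\{0,1,\dots,N\}\cup\{\infty\}$: the current run of heads observed at any exercise time $n\le N$ is one of the runs of heads occurring inside $\omega_1\dots\omega_N$, hence at most the longest such run $R_N$. Taking expectations and then the supremum over $\tau$ in the American risk-neutral formula gives $V^{A}\le \E(R_N)$. By Theorem~\ref{BoydThm}, $\E(R_N)=\log_2 N+\gamma/\ln 2-3/2+\varepsilon_1(\log N/\log 2)+r_1(N)=\log_2 N+O(1)$, so $\limsup_N V^{A}/\log_2 N\le 1$.

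Combining the two estimates gives $\lim_N V^{A}/\log_2 N=1$, that is, $V^{A}\sim\log_2 N$. The one point requiring minor care is that the error term $c\sqrt[3]{\ln N}$ inherited from Theorem~\ref{bumblebee} must be negligible next to the main term $\log_2 N$; this is immediate since the cube root of $\ln N$ grows strictly more slowly than $\ln N$. I do not expect any genuine obstacle here: the corollary is essentially a repackaging of the lower bound of Theorem~\ref{bumblebee} together with Boyd's asymptotics for $\E(R_N)$ furnishing the upper bound.
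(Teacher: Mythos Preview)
Your proposal is correct and follows essentially the same approach as the paper: the lower bound comes from the stopping time $\tau_{t_N}$ via Theorem~\ref{bumblebee}, the upper bound from the pointwise inequality $G_\tau\le R_N$ together with Boyd's asymptotics $\E(R_N)=\log_2 N+O(1)$, and the sandwich yields $V^A/\log_2 N\to 1$. The paper's write-up is only cosmetically different, phrasing the sandwich as $\E(R_N)-t_N\le V^A\le\E(R_N)$; your version supplies a bit more justification for the upper bound than the paper does.
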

			\begin{proof}
				$V^{A} $ is bounded below by the expected payoff of the strategy that
				waits for $\mathbb [E(R_N)]-t_N$ heads, with $t_N$ as in Theorem \ref{bumblebee}, and then exercises.
				On the other hand, $V^{A} $ is bounded above $\E(R_N)$.
				Therefore
				\[
					\E(R_N)-t_N\leq V^{A} \leq \E(R_N).
				\]
				By Theorem \ref{BoydThm},
				\[
					\E(R_N)=\log_2(N/2)+ \gamma /\ln2-1/2=\log_2N+O(1),
				\]
				and so by Theorem \ref{bumblebee},
				\[
					\log_2N -c_2 -\sqrt[3] {\ln N} \leq V^{A} \leq \log_2N+O(1).
				\]
				Dividing by $\log_2 N$ we get
				\[
					1- o(1) \leq \frac{V^{A}}{\log_2 N} \leq 1+ o(1).\qedhere
				\]
			\end{proof}
	\section{Robustness}
		We now consider whether, in the phrase of an anonymous referee,
		\begin{quote}
			small perturbations on input sequences can have drastic effects on
			our studied measurements of complexity.
		\end{quote}
		In other words, whether errors in the measurement of a sequence
		will lead to large errors in the calculated complexity.
		Let $d(x,y)$ denote the Hamming distance between two sequences of the same length $x$ and $y$.
		Let us consider our three types of complexity in turn.
		\paragraph{Run complexity.}
			Here a change in a single bit sometimes cuts the longest run in half.
			That is,
			if $d(x,y)=1$ then $C_R(x)=n-r_x$ and $C_R(y)=n-r_y$ where $r_x\le 2 r_y + 1$.

			On the other hand, since
			the longest run will only be about $\log_2 n$ \cite{Boyd},
			a random change in a single random bit will tend to leave the complexity unchanged.
		\paragraph{Automatic complexity.}
			Here we have numerical evidence that a change in a single bit sometimes has large effects.
			For instance, consider the string $0^n$ which becomes $0^a10^{n-a-1}$. See Table \ref{perturbing}.
		\paragraph{Kolmogorov complexity.}
			A change in a single bit will affect the complexity only logarithmically
			(by at most about $2\log n$)
			since a description of the sequence can include
			hard-coded information about where the changed bit is.
			\cite{FLV} studied Kolmogorov complexity with error in detail.
			\begin{table}
				\centering
				\begin{tabular}{|c|c|}
					\hline
					$w$&$A_N(w)$\\
					\hline
					$0^{23}$   & 1\\
					$0^{22} 1$  & 2\\
					$0^{21} 1 0$ & 3\\
					$0^{20} 1 0^2$& 4\\
					$0^{19} 1 0^3$& 5\\
					$0^{18} 1 0^4$& 6\\
					$0^{17} 1 0^5$& 7\\
					$0^{16} 1 0^6$& 8\\
					$0^{15} 1 0^7$& 9\\
					$0^{14} 1 0^8$& 8\\
					$0^{13} 1 0^9$& 8\\
					$0^{12} 1 0^{10}$&8\\
					$0^{11} 1 0^{11}$&7\\
					\hline
				\end{tabular}
				\caption{
					Nondeterministic automatic complexity in the Hamming ball of radius 1 around $0^{n}$, $n=23$.
				}\label{perturbing}
			\end{table}
	\section{Enhanced Content}
		\paragraph{AutoComplex.} This app for Android devices \cite{links/AC}
		lets you look up nondeterministic automatic complexity values of particular strings.
		The app tells you the complexity of a given string and also provides a ``proof'' or ``witness''.

		This witness is a uniquely accepting state sequence,
		i.e., a sequence of states visited during a run of a witnessing automaton.
		It is analogous to a shortest description $x^*$ of a string $x$,
		familiar from the study of Kolmogorov complexity.

		The app also provides some extensions of the string suggested by
		the familiar autocompletion feature used in search engines.

		\paragraph{The Complexity Guessing Game and the Complexity Option Game.} These two online games \cite{links/CGG,links/COG} invite the player to guess complexities,
		or implement an exercise policy for a complexity-based financial option, respectively.
		The games include graphical displays of millions of the relevant automata.
	\section*{Acknowledgments}
		This work was partially supported by
		a grant from the Simons Foundation
		(\#315188 to Bj\o rn Kjos-Hanssen).

		This material is based upon work supported by the National Science Foundation under Grant No.\ 0901020.
	\bibliographystyle{alpha}
	\bibliography{randomness-risk}
\end{document}